\theoremstyle{plain}
\newtheorem{theorem}{Theorem}
\theoremstyle{remark}
\newtheorem{example}{Example}
\def\plot#1#2{\includegraphics[width=#1]{#2.pdf}}
\def\iu{{\rm i}}
\title{Differential-difference equations associated with the
fractional Lax operators}
\author{V.E.~Adler\thanks{L.D.~Landau Institute for Theoretical Physics, 1A
Ak.~Semenov, Chernogolovka 142432, Russia. E-mail: adler@itp.ac.ru} ,\quad
V.V.~Postnikov\thanks{Sochi Branch of Peoples' Friendship University of Russia,
32 Kuibyshev str., 354000 Sochi, Russia. E-mail: postnikofvv@mail.ru}}
\date{March 28, 2011}
\date{July 12, 2011}
\begin{document}\maketitle

\noindent\hrulefill
\par\noindent{\bf Abstract.}\quad
We study integrable hierarchies associated with spectral problems of the form
$P\psi=\lambda Q\psi$ where $P,Q$ are difference operators. The corresponding
nonlinear differential-difference equations can be viewed as inhomogeneous
generalizations of the Bogoyavlensky type lattices. While the latter turn into
the Korteweg--de Vries equation under the continuous limit, the lattices under
consideration provide discrete analogs of the Sawada--Kotera and
Kaup--Kupershmidt equations. The $r$-matrix formulation and several simplest
explicit solutions are presented.
\medskip

\noindent Keywords: Lax pair, discretization, Bogoyavlensky lattice,
Sawada--Kotera equation, Kaup--Kupershmidt equation
\medskip

\noindent MSC: 35Q53, 37K10
\par
\noindent\hrulefill

\section{Introduction}

The simplest example studied in this paper is the lattice equation
\begin{equation}\label{dSK}
 u_{,t}=u^2(u_2u_1-u_{-1}u_{-2})-u(u_1-u_{-1})
\end{equation}
where we use the shorthand notations
\[
 u=u(n,t),\quad u_{,t}=\partial_t(u),\quad u_j=u(n+j,t).
\]
For the first time, this equation was derived by Tsujimoto and Hirota
\cite[eq.\,(4.12)]{Tsujimoto_Hirota} as the continuous limit of the reduced
discrete BKP hierarchy. Recall that both equations
\begin{equation}\label{VB}
 u_{,t'}=u(u_1-u_{-1})\qquad \text{and}\qquad u_{,t''}=u^2(u_2u_1-u_{-1}u_{-2})
\end{equation}
are very well known integrable models: respectively, the Volterra lattice
\cite{Zakharov_Musher_Rubenchik,Manakov} and the modified
Narita--Itoh--Bogoyavlensky lattice of the second order
\cite{Narita,Itoh,Bogoyavlensky}. One can easily verify that the flows
$\partial_{t'}$ and $\partial_{t''}$ do not commute, that is, these equations
belong to the different hierarchies. Hence, one should not expect a priori that
their linear combination remains integrable. Nevertheless, this is the case: we
will show that equation (\ref{dSK}) admits the Lax representation
\[
 L_{,t}=[A,L]
\]
with the operator $L$ equal to a ratio of two difference operators, namely,
$L=(T^2+u)^{-1}(uT^2+1)T$ where $T$ denotes the shift operator $u_k\to
u_{k+1}$.

Equation (\ref{dSK}) can be cast into the Hirota's bilinear form which admits a
family of generalizations depending on a pair of integer parameters $(l,m)$.
These generalizations were discovered by Hu, Clarkson and Bullough
\cite[eq.\,(4)]{Hu_Clarkson_Bullough} who searched for bilinear equations
admitting $N$-soliton solutions. One of the goals of our paper is to
demonstrate that this family of equations is associated with the fractional Lax
operators of the form
\begin{equation}\label{Lml}
 L=(T^m+u)^{-1}(uT^m+1)T^l.
\end{equation}
As usually, any such $L$ is associated with a whole commutative hierarchy of
equations corresponding to the sequence of difference operators $A$ of
increasing order. We denote this hierarchy dSK$^{(l,m)}$, since it can be
viewed as a discretization of the hierarchy containing the Sawada--Kotera
equation \cite{Sawada_Kotera, Caudrey_Dodd_Gibbon}
\begin{equation}\tag{SK}
 U_{,\tau}=U_5+5UU_3+5U_1U_2+5U^2U_1
\end{equation}
where we denote
\[
 U=U(x,\tau),\quad U_{,\tau}=\partial_\tau(U),\quad U_j=\partial^j_x(U).
\]
For instance, equation (\ref{dSK}) belongs to dSK$^{(1,2)}$. The concrete
formula of the continuous limit in this example is the following, at
$\varepsilon\to0$:
\begin{equation}\label{12_lim}
 u(n,t)=\frac{1}{3}+\frac{\varepsilon^2}{9}U\Bigl(x-\frac{4}{9}\varepsilon t,
    \tau+\frac{2\varepsilon^5}{135}t\Bigr),\quad x=\varepsilon n
\end{equation}
and an analogous formula exists for any $(l,m)$. It should be noted that each
of equations (\ref{VB}) apart defines a discretization of the Korteweg--de
Vries (KdV) equation $U_{,t}=U_3+6UU_1$ rather than the SK one. Moreover, it is
well known that actually all Bogoyavlensky type lattices serve as
discretizations of the KdV equation or its higher symmetries, so that an
infinite family of discrete hierarchies correspond to just one continuous.
Quite analogously, the whole family of dSK$^{(l,m)}$ hierarchies serve as
discrete analogs of the SK hierarchy. We hope that this observation makes clear
the place of these equations in the big picture of integrable systems.

On the other hand, the differential and difference cases are not quite
parallel. First, Lax operator for the SK equation
\[
 L=D^3+UD=(D-f)(D+f)D
\]
is not fractional. Lax operators given by the ratio of differential operators
were studied by Krichever \cite{Krichever}, however it seems that these
examples and (\ref{Lml}) are unrelated.

Second, let us consider the problem of discretization for another important
example, the Kaup--Kupershmidt equation \cite{Kaup, Kupershmidt,
Drinfeld_Sokolov}
\begin{equation}\tag{KK}
 U_{,\tau}=U_5+5UU_3+\frac{25}{2}U_1U_2+5U^2U_1.
\end{equation}
Recall that it is associated with the operator
\[
 L=D^3+UD+\frac{1}{2}U_{,x}=(D+f)D(D-f)
\]
and both SK and KK equations are connected through the Miura substitutions
obtained by factorization of Lax operators \cite{Sokolov_Shabat,
Fordy_Gibbons}:
\[
 U_{\rm SK}=f_{,x}-f^2,\qquad U_{\rm KK}=-2f_{,x}-f^2.
\]
Despite of this close relation, it was noted that some properties of the SK and
KK equations are rather different, see e.g. \cite{Musette_Verhoeven}. It seems
that distinctions between the lattice analogs of these equations are even more
deep. A discretization of the KK equation is presented in section \ref{s:dKK},
however, we were able to find just one operator $L$ in this case comparing to
infinite family (\ref{Lml}) in the SK case, and no discrete analog of Miura
type substitution between dSK and dKK is known.

The contents of the paper is the following. Section \ref{s:pre} contains some
necessary information on the lattices of Bogoyavlensky type, see also books
\cite{Bogoyavlensky_b, Suris}. Section \ref{s:dSK} devoted to discretization of
the SK equation contains the main results of the paper. A general construction
of the Lax pairs with operator (\ref{Lml}) is given in section \ref{s:Lax}. In
section \ref{s:r}, the $r$-matrix approach in the difference setting
\cite{Reyman_Semenov, Blaszak_Marciniak, Suris} is used to obtain explicit
formulas for the operator $A$ and to prove the commutativity of the
dSK$^{(l,m)}$ hierarchy. The continuous limit, the bilinear representation, the
simplest breather type solutions are presented in sections \ref{s:eps},
\ref{s:bilin}. Section \ref{s:dKK} is devoted to discretization of the KK
equation and section \ref{s:generic} contains several examples of coupled
lattice equations associated with more general fractional Lax operators.

\section{Preliminaries}\label{s:pre}

\subsection{Definitions and notations}

We consider differential-difference (lattice) equations of the evolutionary
form
\begin{equation}\label{ut}
 u_{,t}=f(u_m,\dots,u_{-m}),\quad
 u=u(n,t),\quad u_{,t}=\partial_t(u),\quad u_j=u(n+j,t).
\end{equation}
Such equations can be viewed as discrete analogs of continuous evolutionary
equations like KdV or SK
\[
 U_{,\tau}=F(U_k,\dots,U),\quad
 U=U(x,\tau),\quad U_{,\tau}=\partial_\tau(U),\quad U_j=\partial^j_x(U)
\]
(the orders $m$ and $k$ may not coincide under the continuous limit). The shift
operator $T:u_j\mapsto u_{j+1}$ plays the same role for equations (\ref{ut}) as
the total $x$-derivative $D:U_j\mapsto U_{j+1}$ plays in the continuous case.
Differential operators are polynomials with respect to $D$, with the
multiplication defined by the Leibniz rule $DA=D(A)+AD$ and the conjugation
defined by the rule $D^\dag=-D$. In contrast, difference operators are in
general Laurent polynomials, that is contain powers of both $T$ and $T^{-1}$,
and the rules for the multiplication and the conjugation are $TA=T(A)T$ and
$T^\dag=T^{-1}$. For short, we will use subscripts also for denoting action of
$T$ on operators, $A_j=T^j(A)$.

A lattice equation
\[
 u_{,t'}=g(u_k,\dots,u_{-k})
\]
is called symmetry of (\ref{ut}) if the compatibility condition
$D_{,t}(g)=D_{,t'}(f)$ is fulfilled, that is
\begin{equation}\label{f_comm}
 [f,g]_*:=\sum^m_{s=-m}\partial_{u_s}(f)T^s(g)
       -\sum^k_{s=-k}\partial_{u_s}(g)T^s(f)=0.
\end{equation}
The lattice is called integrable if it admits an infinite sequence of
symmetries with the order $k$ greater than any fixed number. The linear space
of all symmetries is called hierarchy. A conservation law is a relation of the
form
\[
 D_{,t}(\rho(u_k,\dots,u))=(T-1)(\sigma(u_{k+m-1},\dots,u_{-m}))
\]
which holds true in virtue of equation (\ref{ut}). The discussion of these
notions and applications to the problem of classification of integrable lattice
equations can be found in the review article by Yamilov \cite{Yamilov}.

\subsection{Bogoyavlensky lattices}

Understanding the structure of dSK$^{(l,m)}$ hierarchy is not possible without
understanding the homogeneous hierarchies of Bogoyavlensky type. A general
pattern of (local) equations from dSK$^{(l,m)}$ is given by the formula
\[
 u_{,t_k}=F^{(L+KM)}+\dots+F^{(L+M)}+F^{(L)}
\]
where $F^{(s)}$ denotes a homogeneous polynomial of degree $s$ with respect to
the variables $u_j$ and $K,L,M$ are related somehow with the parameters $l,m$
and the order $k$ of the flow. Moreover, the first and the last terms in the
sum always correspond to some (modified) lattices of Bogoyavlensky type
belonging to the different hierarchies.

This structure is explained by the following arguments, starting from the Lax
representation with the operator $L$ (\ref{Lml}). Let us consider the scaling
$u\to\delta^{-m}u$, $T\to\delta T$, then it is easy to see that the limit
$\delta\to\infty$ sends $L$ to the operator $L'=u_{-m}T^l+T^{l-m}$ and the
limit $\delta\to0$ leads to $L''=T^{m+l}+u^{-1}T^l$. Each of these operators
corresponds to its own hierarchy of homogeneous lattice equations. The total
inhomogeneous equation contains both of them together with the intermediate
terms which are necessary for preserving commutativity of the flows.

Let us consider the concrete example. One can check that the lattice
\begin{multline}\label{dSK'}
 u_{,t'}=u\bigl(w_1(w_3+w_2+w_1+w)-w_{-1}(w+w_{-1}+w_{-2}+w_{-3})\\
   -u_1(w_3+w_{-1})+u_{-1}(w_1+w_{-3})\bigr),\quad w:=u(1-u_1u_{-1})\quad
\end{multline}
is a higher symmetry of equation (\ref{dSK}). Collecting the homogeneous terms
yields
\[
 u_{,t}=F^{(4)}+F^{(2)},\quad u_{,t'}=G^{(7)}+G^{(5)}+G^{(3)}
\]
and the consistency condition of the flows splits to relations
\begin{gather*}
 [F^{(4)},G^{(7)}]_*=0,\quad [F^{(4)},G^{(5)}]_*+[F^{(2)},G^{(7)}]_*=0,\\
 [F^{(4)},G^{(3)}]_*+[F^{(2)},G^{(5)}]_*=0,\quad [F^{(2)},G^{(3)}]_*=0
\end{gather*}
where commutator $[,]_*$ is defined by equation (\ref{f_comm}). As it was
already said in Introduction, polynomials $F^{(4)}$ and $F^{(2)}$ correspond to
the modified Bogoyavlensky and Volterra lattices. Polynomials $G^{(7)}$ and
$G^{(3)}$ correspond to their symmetries and the intermediate polynomial
$G^{(5)}$ compensates inconsistency of the hierarchies.

The Bogoyavlensky hierarchy B$^{(m)}$ is associated with the operator
$L=T+uT^{-m}$ and we recall here several basic formulas regarding this case. A
detailed theory can be found in the books \cite{Bogoyavlensky, Suris}. More
general operators of the form $L=T^l+uT^{-m}$ were considered recently in the
paper \cite{Svinin}.

The simplest equation from the B$^{(m)}$ hierarchy reads
\begin{equation}\label{Bm-1}
 u_t=u(u_m+\dots+u_1-u_{-1}-\dots-u_{-m}).
\end{equation}
This equations and its higher symmetries are associated with the difference
spectral problem
\[
 \psi_1+u\psi_{-m}=\lambda\psi
\]
and admit the Lax representations
\begin{equation}\label{Bm-k}
 L_{,t_k}=[A^{(k)},L],\quad L=T+uT^{-m},\quad
 A^{(k)}=\pi_+\bigl(L^{(m+1)k}\bigr)
\end{equation}
where $\pi_+$ denotes the projection of any formal series
$A=\sum_{j<\infty}a^{(j)}T^j$ onto the linear space of polynomials with respect
to $T$:
\[
 \pi_+(A)=\sum_{0\le j<\infty}a^{(j)}T^j,\quad \pi_-(A)=\sum_{j<0}a^{(j)}T^j.
\]
In particular,
\[
 A^{(1)}=T^{m+1}+v,\quad v:=u_m+\dots+u
\]
and equation (\ref{Bm-k}) at $k=1$ is equivalent to lattice (\ref{Bm-1}). The
check is easy:
\begin{multline}\label{Bm-1_proof}
 L_{,t}-[A^{(1)},L]= u_{,t}T^{-m}-[T^{m+1}+v,T+uT^{-m}]\\
  = u_{,t}T^{-m}-(u_{m+1}-u+v-v_1)T-u(v-v_{-m})T^{-m},\quad
\end{multline}
the terms with $T$ cancel and the rest yields the equation.

In order to prove that equation (\ref{Bm-k}) correctly defines the lattice for
any $k$, we have to check that all powers of $T$ except for $T^{-m}$ vanish in
the commutator $[A^{(k)},L]$. Since $L^{m+1}$ is a Laurent polynomial with
respect to $T^{m+1}$, hence $A^{(k)}$ is a polynomial with respect to
$T^{m+1}$. Therefore the commutator contains only powers of the form
$T^{(m+1)j+1}$, $j\ge-1$. On the other hand,
\[
 [A^{(k)},L]=-[\pi_-\bigl(L^{(m+1)k}\bigr),L],
\]
so that the commutator does not contain positive powers of $T$ and only one
possible power $T^{-m}$ remains.

It can be proven that equations (\ref{Bm-k}) define a special reduction in the
Lax pair with a generic operator
$L=T+u^{(0)}+u^{(1)}T^{-1}+\dots+u^{(m)}T^{-m}$. In this case one can choose
operators $A$ in the form $A=\pi_+(L^k)$ with arbitrary $k$. For instance, the
Toda lattice hierarchy appears at $m=1$. This type of multi-field systems was
studied, for instance, in papers \cite{Blaszak_Marciniak, Svinin}.

\section{Discretizations of the Sawada--Kotera equation}\label{s:dSK}

\subsection{Lax representation}\label{s:Lax}

Let us consider the difference spectral problem
\begin{equation}\label{psi}
 u\psi_{m+l}+\psi_l=\lambda(\psi_m+u\psi)
\end{equation}
where $m,l$ are integers. We assume that $m,l$ are positive and coprime,
without loss of generality, since the general case can be obtained by
refinement of the mesh and/or change of its directions. It is less obvious that
the numbers $m$ and $l$ can be exchanged: spectral problem (\ref{psi}) is
equivalent to
\[
 u\varphi_{m+l}+\varphi_m=\mu(\varphi_l+u\varphi)
\]
under the change
\begin{equation}\label{lm}
 \psi(n)=\varkappa^n\varphi(n),\quad \lambda=-\varkappa^l,\quad \mu=-\varkappa^{-m}.
\end{equation}

In the operator form, equation (\ref{psi}) reads
\begin{equation}\label{PQ}
 P\psi=\lambda Q\psi,\quad P=(uT^m+1)T^l,\quad Q=T^m+u.
\end{equation}
The isospectral deformations are defined by equation $\psi_{,t}=A\psi$ with
some difference operator $A$. The corresponding Lax equation
\begin{equation}\label{Lax}
 L_{,t}=[A,L],\quad L=Q^{-1}P
\end{equation}
can be rewritten as the system
\begin{equation}\label{PQt}
 P_{,t}=BP-PA,\quad Q_{,t}=BQ-QA
\end{equation}
where one of equations can be considered just as a definition of $B$. Let $P,Q$
be as in (\ref{PQ}), then this system is equivalent to equations
\begin{gather}
\nonumber
 u_{,t}=B(T^m+u)-(T^m+u)A,\\
\label{utAB}
 B(T^{2m}-1)=A_mT^{2m}-A_l+uAT^m-uA_{m+l}T^m.
\end{gather}
In order to resolve the latter we make the assumption that operator $A$ is of
the form
\begin{equation}\label{AF}
 A=F(T^m-T^{-m})
\end{equation}
then $B$ is found as the difference operator
\begin{equation}\label{ABF}
 B=F_mT^m-F_1T^{-m}+u(F-F_{m+1})
\end{equation}
while first equation (\ref{utAB}) turns into
\begin{equation}\label{utF}
 u_{,t}=T^mFu+uFT^{-m}-uT^mF_l-F_lT^{-m}u+F_m-F_l+u(F-F_{m+l})u.
\end{equation}
It is clear that the same evolution of the variable $u$ is defined by the
conjugated operator $F^\dag$ and, moreover, all terms $T^j$, $j\nmid m$ can be
thrown away. This means that we can find $F$ as a self-adjoint operator
$F=F^\dag$ which is a Laurent polynomial with respect to the powers $T^m$:
\begin{equation}\label{F}
 F=f^{(k)}T^{km}+\dots+f^{(1)}T^m+f^{(0)}+T^{-m}f^{(1)}+\dots+T^{-km}f^{(k)},
 \quad k\ge0.
\end{equation}
Certainly, the coefficients depend on $k,l,m$, so that it would be more
rigorous to write $f^{(j,k,l,m)}$ instead of $f^{(j)}$, but we will consider
these numbers fixed at the moment.

Collecting the coefficients at $T^{jm}$, $j>0$, yields the relations
\begin{multline}\label{rec}\qquad
 u_{jm}f^{(j-1)}_m-uf^{(j-1)}_{m+l}
  = f^{(j)}_l-f^{(j)}_m+uu_{jm}(f^{(j)}_{m+l}-f^{(j)}) \\
   +u_{jm}f^{(j+1)}_l-uf^{(j+1)},\quad j=1,\dots,k+1,\qquad
\end{multline}
where it is assumed for convenience that $f^{(j)}=0$ at $j>k$. The coefficient
at $T^0$ gives an evolutionary equation for $u$:
\begin{equation}\label{utk}
 u_{,t}=2u(f^{(1)}-f^{(1)}_l)+u^2(f^{(0)}-f^{(0)}_{m+l})+f^{(0)}_m-f^{(0)}_l.
\end{equation}
System of equations (\ref{rec}), (\ref{utk}) defines the $k$-th flow in the
hierarchy dSK$^{(l,m)}$.

If we are interested in the local evolution only then we require that all
$f^{(j)}$ can be recurrently found as functions of an finite set of variables
$u_i$. In this case a certain restriction on the values of $k$ appears and a
part of the flows is rejected. Indeed, consider equation (\ref{rec}) at
$j=k+1$,
\begin{equation}\label{uf}
 u_{(k+1)m}f^{(k)}_m=uf^{(k)}_{m+l},
\end{equation}
or
\[
 (T^l-1)(\log f^{(k)}_m)=(T^{(k+1)m}-1)(\log u).
\]
It can be proven that it is solvable with respect to $f^{(k)}$ if and only if
$(k+1)m$ is divisible by $l$ and the solution is, up to a constant factor,
\begin{equation}\label{fuk}
 f^{(k)}=u_{-m}u_{l-m}\cdots u_{(s-1)l-m},\quad (k+1)m=sl.
\end{equation}
Since $l$ and $m$ are coprimes, hence the local flows may appear only if
$k=pl-1$ and $s=mp$. The fact that the rest equations (\ref{rec}) for such $k$
are solvable indeed will be verified later in section \ref{s:r}. The case $l=1$
is the only one when there are no restrictions on $k$ and the simplest choice
$k=0$ brings in this case to the following family of lattices.

\begin{theorem}
For any $m>0$, the simplest equation in the hierarchy dSK$^{(1,m)}$
\begin{equation}\label{utm}
 u_{,t}=u^2(u_m\cdots u_1-u_{-1}\cdots u_{-m})
      -u(u_{m-1}\cdots u_1-u_{-1}\cdots u_{1-m})
\end{equation}
possesses Lax representation (\ref{Lax}) with the operators
\begin{gather*}
 P=uT^{m+1}+T,\quad Q=T^m+u,\\
 A=f(T^{-m}-T^m),\quad B=f_1T^{-m}-f_mT^m+u(f_{m+1}-f)
\end{gather*}
where $f=u_{-1}\cdots u_{-m}$.
\end{theorem}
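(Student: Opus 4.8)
The plan is to obtain the statement as the $l=1$, $k=0$ instance of the general construction of this subsection, so that the proof amounts to assembling ingredients already at hand. For $k=0$ the self-adjoint Laurent polynomial (\ref{F}) collapses to a single multiplication operator $F=f^{(0)}$, whence (\ref{AF}) and (\ref{ABF}) give
\[
 A=f^{(0)}(T^m-T^{-m}),\qquad
 B=f^{(0)}_mT^m-f^{(0)}_1T^{-m}+u\bigl(f^{(0)}-f^{(0)}_{m+1}\bigr).
\]
The function $f^{(0)}$ is fixed by the single relation (\ref{uf}), which for $l=1$, $k=0$ reads $u_mf^{(0)}_m=uf^{(0)}_{m+1}$; as noted just before the theorem, its solution is, up to a constant factor, the product (\ref{fuk}) with $s=m$, i.e. $u_{-m}u_{1-m}\cdots u_{-1}$. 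Taking this constant to be $-1$ and writing $f=u_{-1}\cdots u_{-m}$, i.e. $f^{(0)}=-f$, turns the two displayed operators into precisely $A=f(T^{-m}-T^m)$ and $B=f_1T^{-m}-f_mT^m+u(f_{m+1}-f)$, as claimed.

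Two points then remain. First, one must check that this $f$ indeed satisfies $u_mf_m=uf_{m+1}$: with $f_j=T^j(f)=u_{j-1}u_{j-2}\cdots u_{j-m}$ one has $f_m=u_{m-1}u_{m-2}\cdots u_0$ and $f_{m+1}=u_mu_{m-1}\cdots u_1$, so both sides equal $u_mu_{m-1}\cdots u_1u_0$. Since $k=0$, this is the \emph{only} instance of the recursion (\ref{rec}) that arises — every term carrying $f^{(1)},f^{(2)},\dots$ is zero — so the construction above already guarantees that $(A,B)$ defines a consistent Lax equation (\ref{Lax}), equivalently the system (\ref{PQt}); in particular, this case uses nothing from section \ref{s:r}. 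Second, one reads the lattice off (\ref{utk}): setting $f^{(1)}=0$ and $f^{(0)}=-f$ there yields $u_{,t}=u^2(f_{m+1}-f)-f_m+f_1$, and substituting $f_{m+1}=u_m\cdots u_1$, $f=u_{-1}\cdots u_{-m}$, $f_m=u\,u_1\cdots u_{m-1}$, $f_1=u\,u_{-1}\cdots u_{1-m}$ reproduces (\ref{utm}) verbatim.

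A fully self-contained alternative, bypassing the general construction, is to substitute $P=uT^{m+1}+T$, $Q=T^m+u$ and the stated $A,B$ straight into $P_{,t}=BP-PA$ and $Q_{,t}=BQ-QA$: the $Q$-equation collapses to (\ref{utm}) after the non-scalar terms cancel, and the $P$-equation is then automatically satisfied, the one genuinely needed cancellation being again $u_mf_m=uf_{m+1}$. Either way there is no conceptual obstacle; the only thing demanding care is the bookkeeping of the shifted products $f_j=u_{j-1}\cdots u_{j-m}$ and of the overall sign distinguishing the convention $A=f(T^{-m}-T^m)$ of the theorem from the $A=F(T^m-T^{-m})$ used in (\ref{AF}).
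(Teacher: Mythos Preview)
Your proof is correct and essentially matches the paper's own argument: the paper proves the theorem by the direct substitution you describe as the ``self-contained alternative,'' showing that both equations (\ref{PQt}) reduce to the pair $u_mf_m=uf_{m+1}$ and $u_{,t}=u^2(f_{m+1}-f)-f_m+f_1$, exactly the relations you isolate. Your primary route via specialisation of (\ref{AF})--(\ref{utk}) at $l=1$, $k=0$ is just a repackaging of the same computation, and your handling of the sign convention $f^{(0)}=-f$ and the shifted products $f_j$ is accurate.
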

\begin{proof}
A direct computation (cf with (\ref{Bm-1_proof})) proves that both equations
(\ref{PQt}) with given $P,Q,A,B$ are equivalent to relations
\[
 u_mf_m=uf_{m+1},\quad u_{,t}=u^2(f_{m+1}-f)-f_m+f_1.
\]
The former defines the variable $f$ (up to a constant factor) and the latter is
equivalent to lattice (\ref{utm}).
\end{proof}

In particular, equation (\ref{utm}) at $m=2$ coincide with (\ref{dSK}) and at
$m=1$ it is just the modified Volterra lattice
\[
 u_{,t}=u^2(u_1-u_{-1}).
\]
It should be remarked that gauge equivalence (\ref{lm}) between the spectral
problems can be extended on the level of nonlinear equations and the same flow
(\ref{utm}) appears also as a member of dSK$^{(m,1)}$ hierarchy. However,
operator (\ref{F}) is much more complicated in this case: it contains all
powers $T^{m-1},T^{m-2},\dots,T^{1-m}$ comparing with just $F=f^{(0)}$ in
dSK$^{(1,m)}$ case.

Computing of higher symmetries quickly becomes involved, because finding of $F$
requires (discrete) integration of rather bulky expressions. For instance, the
second flow in the hierarchy dSK$^{(1,m)}$ is, according to (\ref{utk}), of the
form
\[
 u_{,t'}=2u(f^{(1)}-f^{(1)}_1)+u^2(f^{(0)}-f^{(0)}_{m+1})+f^{(0)}_m-f^{(0)}_1
\]
where functions $f^{(1)},f^{(0)}$ are defined by relations
\[
 u_{2m}f^{(1)}_m=uf^{(1)}_{m+1},\quad
 u_mf^{(0)}_m-uf^{(0)}_{m+1}=f^{(1)}_1-f^{(1)}_m-uu_m(f^{(1)}-f^{(1)}_{m+1}).
\]
This yields, up to integration constants,
\begin{gather*}
 f^{(1)}=u_{m-1}\cdots u_{-m},\quad
 f^{(0)}=(w+\dots+w_{-2m+1})u_{-1}\cdots u_{-m},\\
 w:=(1-u_{m-1}u_{-1})u_{m-2}\cdots u_0
\end{gather*}
(at $m=2$ equation (\ref{dSK'}) appears). One can check straightforwardly that
the obtained flow commutes with (\ref{utm}) indeed. A general proof and a way
to bypass the integration are given below in section \ref{s:r}.

Adopting nonlocal variables leads to some extension of the hierarchy. In this
case we consider equation (\ref{uf}) as a constraint which defines the variable
$f^{(k)}$ for any $k$. Then we arrive to the following system which generalizes
(\ref{utm}) for any $l$, making the picture more uniform. We will return to
this system in section \ref{s:bilin}.

\begin{theorem}
For any coprime $m,l$, the simplest system in the extended dSK$^{(l,m)}$
hierarchy
\begin{equation}\label{0lm}
 u_mf_m=uf_{m+l},\quad u_{,t}=u^2(f-f_{m+l})+f_m-f_l
\end{equation}
possesses Lax representation (\ref{Lax}) with operators
\begin{gather*}
 P=uT^{m+l}+T^l,\quad Q=T^m+u,\\
 A=f(T^{-m}-T^m),\quad B=f_lT^{-m}-f_mT^m+u(f_{m+l}-f).
\end{gather*}
\end{theorem}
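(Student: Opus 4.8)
The plan is to prove Theorem~2 by the same device used for Theorem~1 (compare the computation~(\ref{Bm-1_proof})): substitute the four operators $P=uT^{m+l}+T^l$, $Q=T^m+u$, $A=f(T^{-m}-T^m)$ and $B=f_lT^{-m}-f_mT^m+u(f_{m+l}-f)$ directly into the system~(\ref{PQt}) and verify that both identities $Q_{,t}=BQ-QA$ and $P_{,t}=BP-PA$ hold identically once the constraint $u_mf_m=uf_{m+l}$ is imposed. Since $P$ and $Q$ depend on $t$ only through $u$, we have $Q_{,t}=u_{,t}$ and $P_{,t}=u_{,t}T^{m+l}$, so the entire claim reduces to evaluating the two operator expressions $BQ-QA$ and $BP-PA$ and recognizing them, respectively, as a multiplication operator and as a multiplication operator times $T^{m+l}$.

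First I would expand $BQ-QA$ with the rule $(aT^i)(bT^j)=ab_iT^{i+j}$; only the powers $T^{2m},T^m,T^0,T^{-m}$ can occur. The coefficient of $T^{2m}$ cancels identically; the coefficient of $T^m$ equals $uf_{m+l}-u_mf_m$ and vanishes by the constraint; the coefficient of $T^{-m}$ equals $u_{-m}f_l-uf$ and vanishes too, but here one must use the $T^{-m}$-shifted form of the constraint, $u_{-m}f_l=uf$ — this is the one spot that needs a moment's care. What survives is the $T^0$ term, which is precisely the right-hand side of the evolution equation in~(\ref{0lm}) (the sign and scale of $f$ being at our disposal, since the constraint fixes $f$ only up to a constant factor). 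The identity $P_{,t}=BP-PA$ is checked the same way: the possible powers are $T^{2m+l},T^{m+l},T^l,T^{l-m}$; the $T^{l-m}$ term cancels identically, the $T^{2m+l}$ and $T^l$ terms vanish by the constraint and its $T^{-m}$-shift, and the surviving $T^{m+l}$ term reproduces the same right-hand side, so the two halves of~(\ref{PQt}) are consistent.

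Structurally, Theorem~2 is nothing but the $k=0$ instance of the general construction of Section~\ref{s:Lax}: with $F$ a multiplication operator (hence trivially self-adjoint, the case $k=0$ of~(\ref{F})), the ansatz (\ref{AF})--(\ref{ABF}) produces operators $A,B$ of exactly the stated shape, the recursion~(\ref{rec}) collapses to its top relation~(\ref{uf}) at $j=1$ --- the stated constraint --- and~(\ref{utk}) with $f^{(1)}=0$ becomes the stated evolution equation. So Theorem~2 could equally be read off from the already-established equivalence of~(\ref{PQt}) with~(\ref{rec})--(\ref{utk}). The single new feature relative to Theorem~1 is that for $l>1$ the relation~(\ref{uf}) at $k=0$, i.e.\ $(T^l-1)(\log f_m)=(T^m-1)(\log u)$, determines $f$ only up to a discrete integration, so $f$ is in general a nonlocal variable; this is exactly why the resulting system belongs to the \emph{extended} dSK$^{(l,m)}$ hierarchy rather than to the local one of Theorem~1. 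Coprimality of $l,m$ plays no role in this particular statement, and the only obstacle throughout is the routine bookkeeping of powers of $T$, together with the need to invoke the $T^{-m}$-shifted constraint $u_{-m}f_l=uf$ alongside the constraint itself.
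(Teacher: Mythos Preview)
Your proposal is correct and mirrors exactly what the paper does: Theorem~2 carries no separate proof in the paper, being the evident $l$-generalization of the direct verification given for Theorem~1 (and, as you also observe, the $k=0$ specialization of (\ref{rec})--(\ref{utk})). Your caveat about the sign of $f$ is well placed: with the stated $A,B$ the $T^0$ coefficient of $BQ-QA$ is $u^2(f_{m+l}-f)+f_l-f_m$, which agrees with the formula in the paper's own proof of Theorem~1 (set $l=1$) but is the negative of the right-hand side of~(\ref{0lm}) --- a harmless typo absorbed by $f\mapsto-f$ (equivalently $t\mapsto-t$).
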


\subsection{Modified lattices}

Equations under consideration can be rewritten in several ways by use of
difference substitutions. The simplest kind of substitution is introducing a
potential. Let $A$ be a constant operator, then substitution $u=A(v)$ maps
solutions of equation $v_{,t}=f[A(v)]$ into solutions of equation
$u_{,t}=A(f[u])$. Table \ref{tab:examples} contains several instances of such
kind, up to the change $u\to e^u$, $v\to e^v$.

\begin{table}[t]
\hrulefill
\begin{align*}
&m=2:     && u_{,t}=u^2(u_2u_1-u_{-1}u_{-2})-u(u_1-u_{-1})\\
&u=v_1v   && v_{,t}=v_1v^3v_{-1}(v_2v_1-v_{-1}v_{-2})-v^2(v_1-v_{-1})\\[2ex]
&m=3:     && u_{,t}=u^2(u_3u_2u_1-u_{-1}u_{-2}u_{-3})-u(u_2u_1-u_{-1}u_{-2})\\
&v=u_1u   && v_{,t}=v(v_3v_1+v_2v-vv_{-2}-v_{-1}v_{-3})-v(v_2+v_1-v_{-1}-v_{-2})\\
&u=v_2v_1v&& v_{,t}=v_2v^2_1v^4v^2_{-1}v_{-2}(v_3v_2v_1-v_{-1}v_{-2}v_{-3})\\
&         && \qquad\qquad -v_1v^3v_{-1}(v_2v_1-v_{-1}v_{-2})\\[2ex]
&m=4:     && u_{,t}=u^2(u_4u_3u_2u_1-u_{-1}u_{-2}u_{-3}u_{-4})
                       -u(u_3u_2u_1-u_{-1}u_{-2}u_{-3})\\
&u=v_2v   && v_{,t}=v_2v_1v^3v_{-1}v_{-2}(v_4v_3v_2v_1-v_{-1}v_{-2}v_{-3}v_{-4})\\
&         && \qquad\qquad -v_1v^2v_{-1}(v_3v_2v_1-v_{-1}v_{-2}v_{-3})
\end{align*}
\caption{Examples of lattices (\ref{utm}) from dSK$^{(1,m)}$ and their modifications}
\label{tab:examples}
\hrulefill
\end{table}

Another kind of substitutions are Miura type transformations. Let $\varphi$ be
a particular solution of spectral problem (\ref{psi}) corresponding to a value
$\lambda=\alpha$ of the spectral parameter. Then one readily finds that the
ratio $h=\varphi_1/\varphi$ is related with the potential $u$ by formula
\[
 M^-:\quad
 u=\frac{\alpha h_{m-1}\cdots h-h_{l-1}\cdots h}{h_{m+l-1}\cdots h-\alpha}.
\]
This defines a difference substitution, according to the following statement.

\begin{theorem}
Let $u$ satisfies an equation (\ref{utk}) from dSK$^{(l,m)}$, then
$h=\varphi_1/\varphi$ also satisfies a lattice equation which can be written as
a conservation law
\begin{equation}\label{htS}
 (\log h)_{,t}=(T-1)S[h].
\end{equation}
\end{theorem}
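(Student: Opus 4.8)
The plan is to show that the evolution of $h=\varphi_1/\varphi$ inherited from the isospectral flow $\psi_{,t}=A\psi$ is automatically in conservation-law form, by working directly with the logarithmic derivative. First I would fix the particular eigenfunction $\varphi$ with $\lambda=\alpha$ and let it evolve by $\varphi_{,t}=A\varphi$, where $A=F(T^m-T^{-m})$ is the operator from the hierarchy (formulas (\ref{AF})--(\ref{F})). Since $A$ is a difference operator, $\varphi_{,t}/\varphi$ is a Laurent polynomial in $T$ applied to $\varphi$ divided by $\varphi$, i.e. a rational expression in the ratios $h_j=\varphi_{j+1}/\varphi_j$; call it $\Phi[h]$, so $(\log\varphi)_{,t}=\Phi[h]$. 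Then
\[
 (\log h)_{,t}=(\log\varphi_1)_{,t}-(\log\varphi)_{,t}=T(\Phi[h])-\Phi[h]=(T-1)\Phi[h],
\]
which is already of the claimed form (\ref{htS}) with $S[h]=\Phi[h]$.

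The substance of the argument is therefore not the conservation-law structure, which is essentially formal, but verifying that the right-hand side is a \emph{bona fide} lattice equation: namely that $\Phi[h]$ depends only on finitely many $h_j$ and involves no residual dependence on $\varphi$ or on $\alpha$ beyond what the Miura map $M^-$ already encodes. Here I would use the spectral problem itself. Relation (\ref{psi}) at $\lambda=\alpha$, divided by $\varphi$, expresses $u$ through the $h_j$ via exactly the stated formula $M^-$; conversely it lets one eliminate $u$. The key step is that the auxiliary operator $B$ from (\ref{PQt}), (\ref{ABF}) governs the evolution of $Q\varphi$ (or $P\varphi$) in the same way $A$ governs that of $\varphi$: from $P_{,t}=BP-PA$ and $P\varphi=\alpha Q\varphi$ one gets $(Q\varphi)_{,t}=B(Q\varphi)-\alpha^{-1}\,\text{(lower)}$, which provides a second, compatible expression for $(\log\varphi)_{,t}$ and forces the cancellation of all genuinely $\varphi$-dependent pieces in $\Phi[h]$. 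Consistency of these two expressions is guaranteed precisely by the Lax equation (\ref{Lax}).

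Concretely I would proceed in three steps: (i) write $A=F(T^m-T^{-m})$ and compute $\Phi[h]=\varphi^{-1}A\varphi$, collecting it as a function of the $h_j$ using $\varphi_{j}/\varphi=h_{j-1}\cdots h_0$ for $j>0$ and the analogous products for $j<0$; (ii) invoke $M^-$ to substitute for every occurrence of $u$ that arises in the coefficients of $F$ (recall the $f^{(j)}$ are products of shifted $u$'s, e.g. (\ref{fuk}), so after substitution they become products of shifted $h$'s times powers of $\alpha$); (iii) conclude that $S[h]:=\Phi[h]$ is a function of finitely many $h_j$, so (\ref{htS}) holds and defines a lattice equation, using (\ref{f_comm}) applied to the original hierarchy to see that different choices of flow give commuting symmetries for $h$ as well. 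The main obstacle I anticipate is step (ii): one must check that the discrete ``integration'' defining the $f^{(j)}$ — the same integrations that required $l\mid(k+1)m$ in the $u$-variables — goes through cleanly after the Miura substitution, i.e. that the telescoping identities such as $(T^l-1)(\log f^{(k)}_m)=(T^{(k+1)m}-1)(\log u)$ transform into solvable relations in the $h$-variables; this is where the coprimality of $l,m$ and the precise form of $M^-$ must be used, and it is plausible that one needs the companion substitution $M^+$ (from $P\varphi$ rather than $Q\varphi$) to make the bookkeeping symmetric.
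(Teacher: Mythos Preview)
Your first paragraph is already the complete proof and coincides with the paper's argument almost verbatim: write $(\log h)_{,t}=(T-1)(\log\varphi)_{,t}=(T-1)\bigl(\varphi^{-1}F(\varphi_m-\varphi_{-m})\bigr)$, observe that the coefficients of $F$ are functions of the $u_j$ and hence of the $h_j$ via $M^-$, and that every ratio $\varphi_k/\varphi$ is a finite product of shifted $h$'s. That is all the paper does.

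Your second and third paragraphs introduce obstacles that do not exist. There are no ``genuinely $\varphi$-dependent pieces'' in $\Phi[h]=\varphi^{-1}A\varphi$ to cancel: since $A$ is a difference operator, this expression is a finite linear combination $\sum_j a^{(j)}\varphi_j/\varphi$, manifestly homogeneous of degree zero in $\varphi$, so it is automatically a function of the $h_j$ once the coefficients $a^{(j)}$ (functions of $u$) are rewritten through $M^-$. No appeal to $B$, to $(Q\varphi)_{,t}$, or to the companion map $M^+$ is needed. Likewise, step~(ii) involves no new discrete integration: the $f^{(j)}$ are already fixed as functions of the $u_i$ by the choice of flow (the recurrences (\ref{rec}) have been solved \emph{before} you pass to $h$), and the Miura substitution is plain composition $u_i\mapsto M^-(h_{\bullet},\alpha)_i$. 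The divisibility condition $l\mid(k+1)m$ concerns whether $f^{(k)}$ is local in $u$; it has no bearing on whether the substitution goes through. Finally, the theorem does not claim that $\alpha$ disappears from $S[h]$ --- it is a parameter of the modified lattice, entering through $M^-$ --- so there is nothing to eliminate on that front either.
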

\begin{proof}
Since $\varphi$ is governed by equation
$\varphi_{,t}=A\varphi=F(\varphi_m-\varphi_{-m})$, hence
\[
 (\log h)_{,t}=(T-1)(\log\varphi)_{,t}
   =(T-1)\Bigl(\frac{1}{\varphi}F(\varphi_m-\varphi_{-m})\Bigr).
\]
Coefficients of the operator $F$ are functions on the variables $h_j$, being
functions on $u_j$'s. The ratios of the form $\varphi_k/\varphi$ can be
expressed through $h_j$ as well and therefore an equation of the form
(\ref{htS}) holds.
\end{proof}

It is worth noticing that an infinite sequence of conservation laws for the
original lattice (\ref{utk}) can be obtained from (\ref{htS}) by use of the
classical trick with the inversion of Miura map $u=M^-(h,\alpha)$ as a formal
power series with respect to $\alpha$ \cite{Miura_Gardner_Kruskal}.

Second Miura map is obtained by replacing $h\to1/h$, $\alpha\to1/\alpha$ which
results in the mapping
\[
 M^+:\quad
 u=\frac{\alpha h_{m+l-1}\cdots h_l-h_{m+l-1}\cdots h_m}{h_{m+l-1}\cdots h-\alpha}.
\]
This substitution relates the same equations as $M^-$, due to invariance of the
spectral problem with respect to the change $n\to-n$, $\lambda\to1/\lambda$.
Therefore, the composition $M^-(M^+)^{-1}$ defines a B\"acklund transformation
which relates two copies of the dSK$^{(l,m)}$ hierarchy. Recall that B\"acklund
transformation for the continuous SK equation was derived in
\cite{Satsuma_Kaup}.

A particular example at $l=2,m=1$ is given by substitutions
\[
 M^-:~ u=\frac{(\alpha-h_1)h}{h_2h_1h-\alpha},\quad
 M^+:~ u=\frac{h_2(\alpha-h_1)}{h_2h_1h-\alpha}
\]
which map solutions of the modified equation
\begin{gather*}
 h_{,t}=\frac{h(\alpha-h)}{h_1hh_{-1}-\alpha}\left(
  \frac{h(\alpha-h_1)(\alpha-h_{-1})(h_2h_1-h_{-1}h_{-2})}
   {(h_2h_1h-\alpha)(hh_{-1}h_{-2}-\alpha)}-h_1+h_{-1}\right)
\end{gather*}
into solutions of (\ref{dSK}).

\subsection{$r$-matrix formulation}\label{s:r}

In this section we prove that:

(i) if the constraint (\ref{uf}) is resolved by formula (\ref{fuk}) then the
further recurrent relations (\ref{rec}) are solved in the local form as well,
so that the (local) hierarchy dSK$^{(l,m)}$ is correctly defined;

(ii) the flows corresponding to the different $k$ commute.
\smallskip

In achieving this goal the $r$-matrix approach is an indispensable tool, see
e.g. \cite{Reyman_Semenov, Blaszak_Marciniak, Suris}. Let us consider the Lie
algebra of the formal Laurent series with respect to the powers $T^m$ of the
shift operator:
\[
 {\mathfrak g}^{(m)}=\Bigl\{\sum_{j<\infty}g^{(j)}T^{jm}\Bigr\}
\]
with the commutator $[A,B]=AB-BA$. It is easy to see that any element
\[
 G=g^{(k+1)}T^{(k+1)m}+g^{(k)}T^{km}+g^{(k-1)}T^{(k-1)m}+\dots
\]
of this Lie algebra admits an unique decomposition of the form
\begin{equation}\label{FGH}
 G=F(T^m-T^{-m})+H
\end{equation}
where $F=F^\dag$ is a self-conjugated difference operator and $H$ is a formal
series which contains only nonpositive powers of $T^m$. Each of the linear
spaces
\[
 {\mathfrak g}^{(m)}_+ = \bigl\{F(T^m-T^{-m})|~F=F^\dag\bigr\},\quad
 {\mathfrak g}^{(m)}_- = \Bigl\{\sum_{j\le0}h^{(j)}T^{jm}\Bigr\}
\]
constitutes a Lie algebra: for ${\mathfrak g}^{(m)}_-$ this is obvious and
for ${\mathfrak g}^{(m)}_+$ we have
\[
 [F(T^m-T^{-m}),F'(T^m-T^{-m})]=(P+P^\dag)(T^m-T^{-m})
\]
where $P=F(T^m-T^{-m})F'$.

Thus, formula (\ref{FGH}) is the decomposition (in the vector space sense)
\[
 {\mathfrak g}^{(m)}={\mathfrak g}^{(m)}_+\oplus{\mathfrak g}^{(m)}_-
\]
of the Lie algebra into the direct sum of two Lie subalgebras. This
decomposition defines the projections $\pi_\pm$ on the ${\mathfrak
g}^{(m)}_\pm$ component and the $r$-matrix $r=\frac{1}{2}(\pi_+-\pi_-)$. Now we
can formulate the following theorem about Lax equations (\ref{PQ}), (\ref{Lax})
with fractional $L$ operator.

\begin{theorem}\label{th:Ltk}
Let $l,m$ be coprime, $P=(uT^m+1)T^l$, $Q=T^m+u$ and let $L=Q^{-1}P$ be
expanded as a formal Laurent series. Then the flows
\begin{equation}\label{Ltp}
 L_{,t_p}=[\pi_+(L^{pm}),L]
\end{equation}
are correctly defined for all $p=1,2,\dots$, coincide with the dSK$^{(l,m)}$
flows introduced by equations (\ref{rec}), (\ref{utk}) and commute with each
other.
\end{theorem}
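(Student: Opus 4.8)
The plan is to run the standard Adler--Kostant--Symes (AKS) machinery on the Lie algebra $\mathfrak{g}^{(m)}$ with the $r$-matrix coming from the splitting $\mathfrak{g}^{(m)}=\mathfrak{g}^{(m)}_+\oplus\mathfrak{g}^{(m)}_-$ just introduced, and then to identify the resulting flows with the recurrence-defined dSK$^{(l,m)}$ flows. First I would check that $L=Q^{-1}P$, expanded as a formal Laurent series, actually lies in $\mathfrak{g}^{(m)}$, i.e. is a series in powers of $T^m$ alone (not all powers of $T$). Here the coprimality of $l$ and $m$ is used together with the specific shape $P=(uT^m+1)T^l$, $Q=T^m+u$: one computes $L = (T^m+u)^{-1}(uT^m+1)T^l$ and observes that modulo the relation $Q L = P$ the operator $L$ satisfies a functional identity that forces its expansion to contain only exponents congruent to $l$ mod $m$ — and since we are free to pass to the variable $\mu$ via the gauge (\ref{lm}), equivalently only powers of $T^m$ after an overall shift. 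Concretely I expect $L^{m}$, or more precisely the combination that enters (\ref{Ltp}), to be genuinely an element of $\mathfrak{g}^{(m)}$; this is exactly the point that makes $\pi_+$ in the sense of (\ref{FGH}) applicable. I would also record that $L^\dag$ is related to $L$ under $n\to -n$, which is the algebraic shadow of the symmetry $\lambda\to1/\lambda$ mentioned after the Miura maps, and which guarantees that the $\mathfrak{g}^{(m)}_+$-part of $L^{pm}$ has the self-adjoint form $F(T^m-T^{-m})$ required by the ansatz (\ref{AF}).

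Next, \emph{well-definedness} of (\ref{Ltp}): since $\mathfrak{g}^{(m)}_+$ is a Lie subalgebra, for $A=\pi_+(L^{pm})$ we have $[A,L^{pm}]=-[\pi_-(L^{pm}),L^{pm}]$, hence $[A,L]$, after multiplying through, is forced to be a Laurent series of the same restricted type as $L$ itself but with a bounded range of exponents on the positive side (because $A$ is a genuine difference operator while $[\pi_-(L^{pm}),L^{pm}]$ kills the top powers). Combined with the observation that $[A,L]$ must be the derivative of $L$ under a flow on the single function $u$, one concludes the right-hand side of (\ref{Ltp}) is of the form $(\text{something})T^{-\text{(bottom)}}$ depending on $u$ and finitely many shifts, i.e. it defines a bona fide evolutionary lattice equation. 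This is the exact analog of the elementary argument given for the Bogoyavlensky case around formula (\ref{Bm-1_proof}), just transplanted through the fractional $L=Q^{-1}P$; writing it via the system (\ref{PQt})–(\ref{utAB}) with the substitutions (\ref{AF})–(\ref{ABF}) makes the bookkeeping mechanical.

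For the \emph{identification with (\ref{rec}), (\ref{utk})}: set $A=\pi_+(L^{pm})=F(T^m-T^{-m})$ with $F$ self-adjoint of the form (\ref{F}) of degree $k=pl-1$ (the degree count: $L$ raises the exponent by $l$ per unit power and we take the $pm$-th power, so after projection $F$ has top power $T^{km}$ with $km=(pl-1)m$, consistently with $(k+1)m=pml=sl$, $s=pm$, matching (\ref{fuk})). Plugging this $A$ and the corresponding $B$ from (\ref{ABF}) into (\ref{PQt}) reproduces, coefficient by coefficient in powers of $T^{jm}$, precisely the recurrence (\ref{rec}) for the $f^{(j)}$ and the evolution (\ref{utk}) for $u$ — this is the same computation as in the proofs of the two preceding theorems, now carried out for general $k$. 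In particular the top relation is (\ref{uf}), which by the analysis already in the text is solvable in local form exactly when $l\mid (k+1)m$, i.e. for $k=pl-1$; and the point (i) of the section — that the \emph{remaining} relations (\ref{rec}) are then also locally solvable — follows automatically, since the left-hand side of (\ref{Ltp}) is manifestly a well-defined (local) evolution of $u$ and the $f^{(j)}$ are simply the coefficients of the difference operator $\pi_+(L^{pm})$, which are polynomial (after one discrete integration governed by an exact relation) in the $u_i$.

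Finally, \emph{commutativity}: this is the one line that the $r$-matrix formalism buys us for free. The splitting $\mathfrak{g}^{(m)}=\mathfrak{g}^{(m)}_+\oplus\mathfrak{g}^{(m)}_-$ into two subalgebras means $r=\tfrac12(\pi_+-\pi_-)$ satisfies the modified Yang--Baxter equation, so the flows $L_{,t_p}=[\pi_+(L^{pm}),L]$ are the standard AKS/Lax hierarchy generated by the $\mathrm{Ad}$-invariant functions $\operatorname{tr}L^{pm}$ (or the spectral invariants of $L$) on the corresponding Poisson space; AKS theory then gives $[\partial_{t_p},\partial_{t_q}]=0$. Equivalently, one checks directly via the Semenov-Tian-Shansky identity that $\partial_{t_q}\pi_+(L^{pm})-\partial_{t_p}\pi_+(L^{qm})=[\pi_+(L^{qm}),\pi_+(L^{pm})]+\pi_+([L^{pm},L^{qm}])$ and the last term vanishes because powers of $L$ commute, whence the zero-curvature form closes. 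The main obstacle, and the only genuinely paper-specific step, is the very first one: verifying that for coprime $l,m$ the fractional operator $L=Q^{-1}P$ lands (after the harmless gauge (\ref{lm})) in $\mathfrak{g}^{(m)}$ so that the $\mathfrak{g}^{(m)}$-valued $r$-matrix applies at all, and that the projection $\pi_+$ in the refined sense of (\ref{FGH}) coincides on these elements with the naive polynomial truncation used to write $A$; once that compatibility is in hand, well-definedness, the match with (\ref{rec})--(\ref{utk}), and commutativity are all routine consequences of the subalgebra splitting.
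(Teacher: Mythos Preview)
Your overall strategy matches the paper's: show $L^{pm}\in\mathfrak{g}^{(m)}$, take $A=\pi_+(L^{pm})=F(T^m-T^{-m})$, use $[\pi_+(G),L]=-[\pi_-(G),L]$ to bound the order of the commutator, identify the coefficients with (\ref{rec})--(\ref{utk}), and finish with the standard zero-curvature commutativity computation. However, your opening step is misstated and two of your detours are unnecessary.

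First, you propose to check that $L=Q^{-1}P$ ``actually lies in $\mathfrak{g}^{(m)}$''. It does not: the expansion $L=u_{-m}T^l+(1-u_{-m}u_{-2m})T^{l-m}+\cdots$ carries only exponents congruent to $l\pmod m$, so $L\in\mathfrak{g}^{(m)}$ only when $m=1$. The paper is explicit on this point: $L\notin\mathfrak{g}^{(m)}$, but $G=L^{pm}$ \emph{does} lie in $\mathfrak{g}^{(m)}$ (its exponents are $\equiv pml\equiv 0\pmod m$), and that is all that is needed for $\pi_+$ to make sense in (\ref{Ltp}). You eventually say as much, but only after a digression through the gauge (\ref{lm}); that gauge plays no role in the proof and should be dropped.

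Second, your appeal to an $L^\dag\leftrightarrow L$ relation under $n\to-n$ to ``guarantee that the $\mathfrak{g}^{(m)}_+$-part of $L^{pm}$ has the self-adjoint form $F(T^m-T^{-m})$'' is superfluous. The decomposition (\ref{FGH}) holds for \emph{every} element of $\mathfrak{g}^{(m)}$: any $G=\sum g^{(j)}T^{jm}$ splits uniquely as $F(T^m-T^{-m})+H$ with $F=F^\dag$ and $H\in\mathfrak{g}^{(m)}_-$. The self-adjointness of $F$ is part of the definition of $\mathfrak{g}^{(m)}_+$ and demands nothing special of $G$. The paper even writes down the explicit recursion $f^{(j)}=g^{(j+1)}+f^{(j+2)}$ starting from $f^{(k+2)}=f^{(k+1)}=0$, which immediately shows that every $f^{(j)}$ is a local polynomial in the $u_i$ --- no ``discrete integration'' is needed, contrary to your parenthetical remark.

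Once these two points are straightened out, the rest of your argument --- the order bound via $[\pi_+(G),L]=-[\pi_-(G),L]$, the coefficient matching with (\ref{rec}) that yields (\ref{utk}), and the direct commutativity check $\pi_+\bigl([\pi_-(G),\pi_-(G')]\bigr)=0$ --- is exactly what the paper does.
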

\begin{proof}
After expanding, $L$ takes the form
\begin{align*}
 L&=(1-u_{-m}T^{-m}+(u_{-m}T^{-m})^2-\dots)(u_{-m}+T^{-m})T^l\\
  &=u_{-m}T^l+(1-u_{-m}u_{-2m})T^{l-m}+\dots\,.
\end{align*}
Differentiating this series turns (\ref{Ltp}) into an infinite system of
equations for a single variable $u$, and the correctness means that all these
equations must coincide. To prove this, we compare representation (\ref{Ltp})
with Lax equation (\ref{Lax}) in fractional form.

Notice that $L$ itself does not belong to the Lie algebra ${\mathfrak
g}^{(m)}$, but its power $G=L^{pm}$ does, so that the projection
$A=\pi_+(G)=F(T^m-T^{-m})$ makes sense. We denote the order of operator $F$ as
$k=pl-1$, in agreement with (\ref{F}) and (\ref{fuk}). The coefficients of $F$
are uniquely computed from coefficients of $G$ accordingly to the recurrent
relations
\[
 f^{(k+2)}=f^{(k+1)}=0,\quad
 f^{(j)}=g^{(j+1)}+f^{(j+2)},\quad j=k,k-1,\dots,0
\]
so that all coefficients are local functions of $u_j$ (in particular, $f^{(k)}$
is given by (\ref{fuk})). Moreover, the order of (\ref{Ltp}) right hand side is
equal to $l$, because $[\pi_+(G),L]=-[\pi_-(G),L]$. This proves that $F$
provides a solution of the recurrent relations (\ref{rec}) as well (which is
unique up to integration constants). Indeed, these relations were derived from
the condition that terms with $T^{(k+1)m}$,\dots,$T^m$ in equation (\ref{utF})
cancel which is equivalent to cancellation of the powers
$T^{(k+1)m+l}$,\dots,$T^{m+l}$ in the original Lax equation (\ref{Lax}). Thus,
flow (\ref{Ltp}) coincides with a flow from dSK$^{(l,m)}$ which is, therefore,
local. On the other hand, this proves correctness of (\ref{Ltp}), since the
whole infinite set of equations turns out to be equivalent to the single
equation (\ref{utk}).

The proof of the commutativity is standard. Let $G'=L^{p'm}$ and $A'=\pi_+(G')$
then
\[
 (L_{,t_p})_{,t_{p'}}-(L_{,t_{p'}})_{,t_p}=[A_{t_{p'}}-A'_{t_p}+[A,A'],L],
\]
so it is sufficient to prove that
\[
 A_{t_{p'}}-A'_{t_p}+[A,A']=0.
\]
Since $A_{t_{p'}}=\pi_+([A',G])$ and $[G,G']=0$, this is equivalent to
\begin{align*}
 & \pi_+\bigl([A',G]-[A,G']+[A,A']\bigr)\\
 &\quad =\pi_+\bigl([G'-\pi_-(G'),G]-[G-\pi_-(G),G']+[G-\pi_-(G),G'-\pi_-(G')]\bigr)\\
 &\quad = \pi_+\bigl([\pi_-(G),\pi_-(G')]\bigr)=0
\end{align*}
as required.
\end{proof}

\subsection{Continuous limit}\label{s:eps}

Here we compute the continuous limit for the basic flow of the extended
hierarchy dSK$^{(l,m)}$ defined by equation (\ref{0lm}). There is a certain
technical difficulty in the prolongation of the continuous limit on the
variable $f$ which is not local at $l\ne1$. In order to solve the constraint,
this variable should be considered as a series with respect to the small
parameter. Up to this complication the continuous limit is very similar to
example (\ref{12_lim}) from Introduction. We postulate that, at
$\varepsilon\to0$, the variables $u,f$ are of the form
\begin{equation}\label{ufeps}
\begin{gathered}[b]
 u(n,t)=a+ab\varepsilon^2U(x+c\varepsilon t,\tau+d\varepsilon^2t),\\
 f(n,t)=1+\sum^\infty_{s=2}\varepsilon^s
   Y_s(x+c\varepsilon t,\tau+d\varepsilon^2t),\quad x=\varepsilon n
\end{gathered}
\end{equation}
with undetermined coefficients $a,b,c,d$. Functions $Y_s$ are expressed through
the function $U$ and its partial derivatives with respect to $x$ after
substituting into first equation (\ref{0lm}) and taking the Taylor expansion
about $\varepsilon=0$ (clearly, one can neglect the dependence on $t$ here). We
find, omitting the unessential integration constants:
\begin{align*}
 Y_2&=\frac{mb}{l}U,\\
 Y_3&=-\frac{m(m+l)b}{2l}U_1,\\
 Y_4&=\frac{m(m+l)(2m+l)b}{12l}U_2+\frac{m(m-l)b^2}{2l^2}U^2,\\
 Y_5&=-\frac{m^2(m+l)^2b}{24l}U_3-\frac{m(m^2-l^2)b^2}{2l^2}UU_1,\\
 Y_6&=\frac{m(m+l)(2m+l)(3m^2+3ml-l^2)b}{720l}U_4
             +\frac{m(m^2-l^2)(3m+2l)b^2}{24l^2}U^2_1 \\
     &\qquad +\frac{m(m^2-l^2)(2m+l)b^2}{12l^2}UU_2
             +\frac{m(m-l)(m-2l)b^3}{6l^3}U^3.
\end{align*}
This is enough, since we need only terms up to $\varepsilon^7$ when
substituting into second equation (\ref{0lm}). The coefficients $a,c$ are found
from the requirement that the low order terms vanish while the coefficients
$b,d$ are responsible for the scaling of $U$ and $t$ and can be chosen
arbitrarily. Finally, we come to the following statement.

\begin{theorem}
Continuous limit (\ref{ufeps}) with the values of parameters
\[
 a=\frac{m-l}{m+l},\quad b=\frac{ml}{6},\quad
 c=2m,\quad d=\frac{m^3(l^2-m^2)}{180}
\]
sends systems (\ref{0lm}) into the Sawada--Kotera equation
\[
 U_{,\tau}=U_5+5UU_3+5U_1U_2+5U^2U_1.
\]
\end{theorem}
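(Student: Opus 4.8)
The plan is to substitute the ansatz (\ref{ufeps}) directly into the second (evolutionary) equation of (\ref{0lm}) and to match powers of $\varepsilon$, using the functions $Y_2,\dots,Y_6$ displayed above as already known. These $Y_s$ arise from inserting the same ansatz into the constraint $u_mf_m=uf_{m+l}$; after dividing that relation by $a$ the parameter $a$ disappears from it, which is why the $Y_s$ depend only on $U$, its $x$-derivatives, and on $b,l,m$. The whole statement thus reduces to a single identity between formal power series in $\varepsilon$.

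For the left-hand side the chain rule gives $u_{,t}=ab\varepsilon^2\partial_tU=abc\,\varepsilon^3U_1+abd\,\varepsilon^7U_{,\tau}$ (the $\tau$-scale in (\ref{ufeps}) being of order $\varepsilon^5$ in $t$), so only the orders $\varepsilon^3$ and $\varepsilon^7$ occur there. On the right-hand side one writes $u^2=a^2(1+b\varepsilon^2U)^2$, so that only $a^2$ appears, and Taylor-expands the shifted copies $f_{m+l},f_m,f_l$ (that is, $\xi\mapsto\xi+(m+l)\varepsilon$, $\xi+m\varepsilon$, $\xi+l\varepsilon$) about $\xi$. Collecting orders: at $\varepsilon^3$ the right-hand side is a multiple of $U_1$, and matching it with the transport term yields $ac=\frac{m}{l}\bigl((m-l)-a^2(m+l)\bigr)$; at orders $\varepsilon^4$ and $\varepsilon^6$ it vanishes identically thanks to the explicit shape of the $Y_s$ (at $\varepsilon^4$, for instance, everything collapses to the bracket $Y_3'+\frac{m+l}{2}Y_2''$, which is zero); at order $\varepsilon^5$ it is a combination of $U_3$ and $UU_1$, which must vanish since the left-hand side carries nothing there, and the $U_3$-part vanishes precisely when $a^2(m+l)^2=(m-l)^2$, whereupon the $UU_1$-part vanishes as well. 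Taking $a=\frac{m-l}{m+l}$ one then reads off $c=2m$ from the $\varepsilon^3$ relation.

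The decisive order is $\varepsilon^7$. With $a^2=(m-l)^2/(m+l)^2$ inserted, the right-hand side is a definite combination $\beta_1U_5+\beta_2UU_3+\beta_3U_1U_2+\beta_4U^2U_1$ with $\beta_i$ explicit in $m,l,b$; here the $UU_3$ and $U_1U_2$ terms are fed by the $U$-linear part of $u^2$ together with the nonlinear parts of $Y_4,Y_5$, and the cubic term $U^2U_1$ by the $U^2$-part of $u^2$ times the $\varepsilon^3$-term of $f-f_{m+l}$. Equating this with $abd\,U_{,\tau}$, the essential structural facts are that $\beta_2=\beta_3$ (so $UU_3$ and $U_1U_2$ enter with equal coefficients, which is exactly what singles out the Sawada--Kotera rather than the Kaup--Kupershmidt equation) and that, with the stated normalization $b=\frac{ml}{6}$, $d=\frac{m^3(l^2-m^2)}{180}$, one has $\beta_1=abd$ and $\beta_2=\beta_3=\beta_4=5abd$, giving $U_{,\tau}=U_5+5UU_3+5U_1U_2+5U^2U_1$. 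That $b,d$ are otherwise at our disposal reflects the scaling symmetry $U(x,\tau)\mapsto\mu^2U(\mu x,\mu^5\tau)$ of the SK equation.

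The only genuinely laborious step, and the one where the claim could a priori fail, is this order-$\varepsilon^7$ computation: it brings in all of $Y_2,\dots,Y_6$ together with several Taylor coefficients, and a fair number of cancellations must conspire so that the two quadratic coefficients coincide and the nonlinear-to-linear ratios come out as asserted. This — rather than the mere emergence of a fifth-order equation — is the real content, and it rests on the precise, constraint-determined form of the $Y_s$. A convenient cross-check is the case $m=2$, $l=1$, which must reproduce equation (\ref{dSK}) and the limit (\ref{12_lim}) quoted in the Introduction.
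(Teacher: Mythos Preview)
Your proposal follows the same route as the paper: use the constraint to determine $Y_2,\dots,Y_6$, then substitute the ansatz into the evolutionary equation of (\ref{0lm}) and match powers of $\varepsilon$, fixing $a,c$ by the vanishing of the low-order terms and reading off the SK equation at order $\varepsilon^7$ with $b,d$ as free scaling parameters. You give more detail than the paper (which simply announces the outcome after listing the $Y_s$), but the method and content are identical.
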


The higher flows of the SK hierarchy can be derived analogously from suitable
linear combinations of the dSK$^{(l,m)}$ flows. However, the general formulas
become rather complicated and we restrict ourselves by the following concrete
example corresponding to the local hierarchy dSK$^{(1,2)}$. Let
$u_{,t}=88u_{,t_1}+27u_{,t_2}$ where the flows $\partial_{t_1}$ and
$\partial_{t_2}$ are defined by equations (\ref{dSK}) and (\ref{dSK'})
respectively, then the formula
\[
 u(n,t)=\frac{1}{3}+\frac{\varepsilon^2}{9}
  U\Bigl(x-\frac{200}{9}\varepsilon t,\tau-\frac{16\varepsilon^7}{189}t\Bigr),
  \quad x=\varepsilon n
\]
defines the continuous limit to the 7-th order symmetry of SK equation
\begin{multline*}
 \qquad U_{,\tau}=U_7+7UU_5+14U_1U_4+21U_2U_3+14U^2U_3\\
  +42UU_1U_2+7U^3_1+\frac{28}{3}U^3U_1.\qquad
\end{multline*}
It is well known that there are gaps in the sequence of orders $k$ of equations
from the SK hierarchy, namely, the restrictions $k\nmid2,3$ are fulfilled, so
that the next higher symmetry is of 11-th order. The natural question appears,
how this agrees with relations (\ref{F})--(\ref{utk}) or (\ref{Ltp}) which show
that in the discrete case there are no gaps multiple 3. It turns out that their
appearance is an artefact of the continuous limit. A straightforward
computation shows that if we consider a linear combination with the next
dSK$^{(1,2)}$ flow $u_{,t}=u_{,t_1}+\alpha u_{,t_2}+\beta u_{,t_3}$ and set
\[
 u(n,t)=a+b\varepsilon^2U(x+c\varepsilon t,\tau+d\varepsilon^9t),\quad
 x=\varepsilon n
\]
then all parameters are uniquely determined by the condition of vanishing the
terms up to $\varepsilon^{10}$, however then the coefficients at
$\varepsilon^{11}$ cancel automatically and only the trivial flow $U_{,\tau}=0$
appears.

\subsection{Bilinear equations}\label{s:bilin}

The constraint (\ref{uf}) can be solved by introducing additional variables and
this leads to a convenient representation of the basic system (\ref{0lm}) of
the extended dSK$^{(l,m)}$ hierarchy. Let
\[
 u=\frac{v_l}{v},\quad f=\frac{v}{v_{-m}}
\]
then first equation (\ref{0lm}) is satisfied identically and the second one is
equivalent to
\[
 (T^l-1)\frac{v_{,t}}{v}=(T^m-1)\Bigl(\frac{v}{v_{l-m}}-\frac{v_l}{v_{-m}}\Bigr).
\]
Further substitutions
\[
 v=\frac{w_m}{w}\quad\Rightarrow\quad
 u=\frac{w_{m+l}w}{w_mw_l},\quad f=\frac{w_mw_{-m}}{w^2}
\]
bring to the bilinear equation
\begin{equation}\label{wlm}
 w_{l,t}w-w_lw_{,t}=w_mw_{l-m}-w_{-m}w_{l+m}.
\end{equation}
For the first time, it appeared in paper \cite{Hu_Clarkson_Bullough}, in a
slightly more general form
\[
 w_{l,t}w-w_lw_{,t}=w_mw_{l-m}-\alpha w_{-m}w_{l+m}+\beta ww_l
\]
which is reduced to (\ref{wlm}) by the point change $\tilde w(n,t)=e^{\beta
t}\alpha^{n^2}w(n,t)$. In particular, it was proven in
\cite{Hu_Clarkson_Bullough} that this equation admits $N$-soliton solutions.
Here, we consider in more details a specification of 2-soliton formula which
leads to the breather solution.

\begin{figure}[t!]
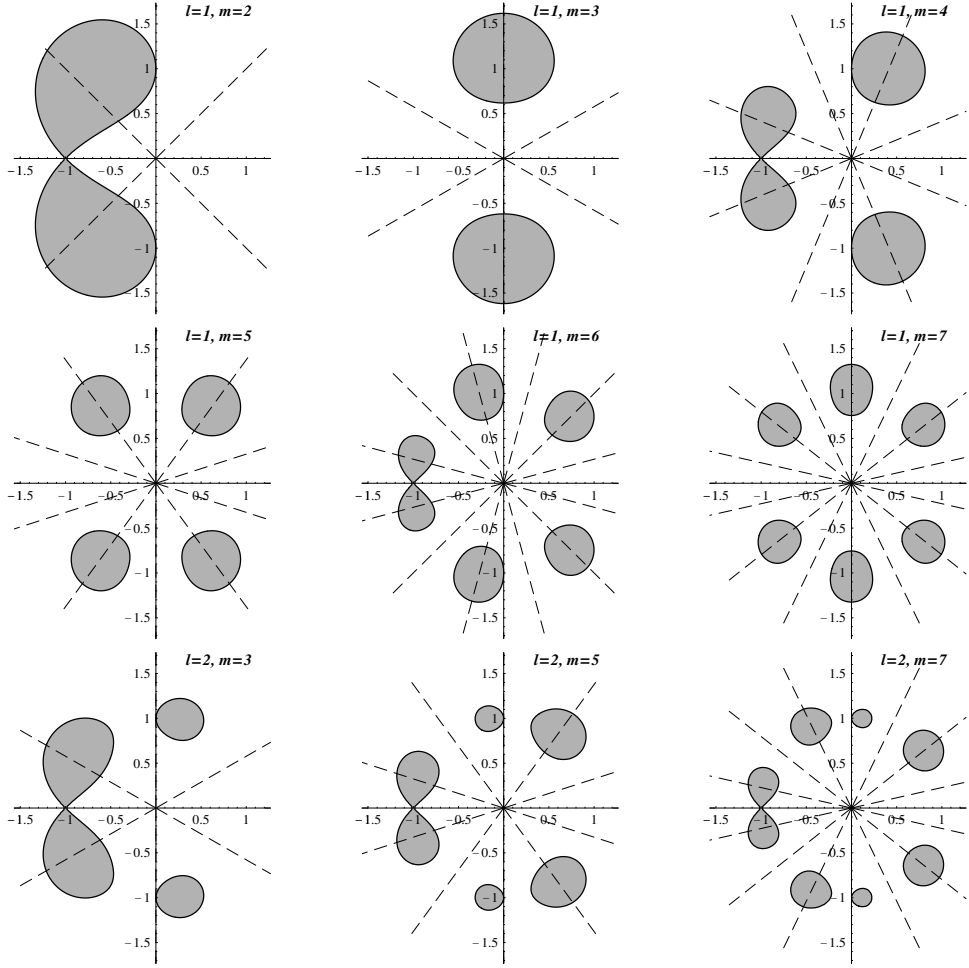

\plot{35mm}{zones_12}\hfill\plot{35mm}{zones_13}\hfill\plot{35mm}{zones_14}\\
\plot{35mm}{zones_15}\hfill\plot{35mm}{zones_16}\hfill\plot{35mm}{zones_17}\\
\plot{35mm}{zones_23}\hfill\plot{35mm}{zones_25}\hfill\plot{35mm}{zones_27}\\
\caption{The values of $q=\rho e^{\iu\varphi}$ inside the bounded domains in
$\mathbb C$ correspond to the regular potentials $u(n,t)$. The values along the
dashed lines correspond to the potentials periodic in $t$.}
\label{fig:zones}
\end{figure}

The substitution of the 2-soliton Ansatz
\begin{equation}\label{2sol}
 w(n,t)=1+e_1+e_2+A_{12}e_1e_2,\quad e_i=q^n_i\exp(-\omega_it+\delta_i)
\end{equation}
into (\ref{wlm}) gives us the dispersion relation and the phase shift:
\begin{equation}\label{omegA}
 \omega_i=q^m_i-q^{-m}_i,\quad
 A_{ij}=\frac{(q^l_i-q^l_j)(q^m_i-q^m_j)}{(1-q^l_iq^l_j)(1-q^m_iq^m_j)}.
\end{equation}
The direct check proves that then the 3-soliton Ansatz
\[
 w=1+e_1+e_2+e_3+A_{12}e_1e_2+A_{13}e_1e_3+A_{23}e_2e_3+A_{12}A_{13}A_{23}e_1e_2e_3
\]
satisfies (\ref{wlm}) automatically. It is interesting to compare these
formulas with their counterparts for the continuous SK equation
\cite{Sawada_Kotera, Caudrey_Dodd_Gibbon, DJKM, Parker}
\[
 e_i=\exp(\varkappa_ix-\omega_it+\delta_i),\quad
 \omega_i=\varkappa^5_i,\quad
 A_{ij}=
 \frac{(\varkappa_i-\varkappa_j)^2
       (\varkappa^2_i-\varkappa_i\varkappa_j+\varkappa^2_j)}
      {(\varkappa_i+\varkappa_j)^2
       (\varkappa^2_i+\varkappa_i\varkappa_j+\varkappa^2_j)}.
\]
Formula (\ref{2sol}) allows us to obtain the breather-type solutions as well,
if we choose
\[
  q_1=\rho e^{\iu\varphi},\quad
  q_2=\rho e^{-\iu\varphi},\quad
  \delta_1=\alpha+\iu\beta,\quad
  \delta_2=\alpha-\iu\beta.
\]
The regularity of the potential $u(n,t)$ is achieved under certain restrictions
on the value of $q$. In order to show this, rewrite relations (\ref{omegA}) as
follows:
\begin{gather*}
 \omega=\mu+\iu\nu,\quad
 \mu=(\rho^m-\rho^{-m})\cos m\varphi,\quad
 \nu=(\rho^m+\rho^{-m})\sin m\varphi,\\
 A_{12}=-\frac{4\rho^{m+l}\sin l\varphi\sin m\varphi}
              {(1-\rho^{2l})(1-\rho^{2m})},
\end{gather*}
then a simple algebra brings (\ref{2sol}) to the form
\[
 w=1+2z\cos y+A_{12}z^2,\quad
 y=\varphi n-\nu t+\beta,\quad z=\rho^ne^{\alpha-\mu t}.
\]
In particular, if $\varphi=\frac{2k+1}{2m}\pi$ then $\mu=0$ and solution $u$ is
periodic in $t$. The necessary and sufficient condition for $u$ to be regular
is that the function $w$ does not vanish at any $n,t$. In the generic case the
variables $y,z$ are independent and then this is equivalent to the condition
that the trinomial $1+2z+A_{12}z^2$ does not vanish at real $z$, that is
\[
 (\rho^l-\rho^{-l})(\rho^m-\rho^{-m})+4\sin l\varphi\sin m\varphi<0.
\]
Thus, we see that already two-phase solutions in these models exhibit a
nontrivial zone structure of the spectrum. The corresponding domains in the
plane $q=\rho e^{\iu\varphi}$ are shown on fig. \ref{fig:zones}, and the
examples of solutions $u(n,t)$ are shown on fig. \ref{fig:breather}.

\begin{figure}[t!]
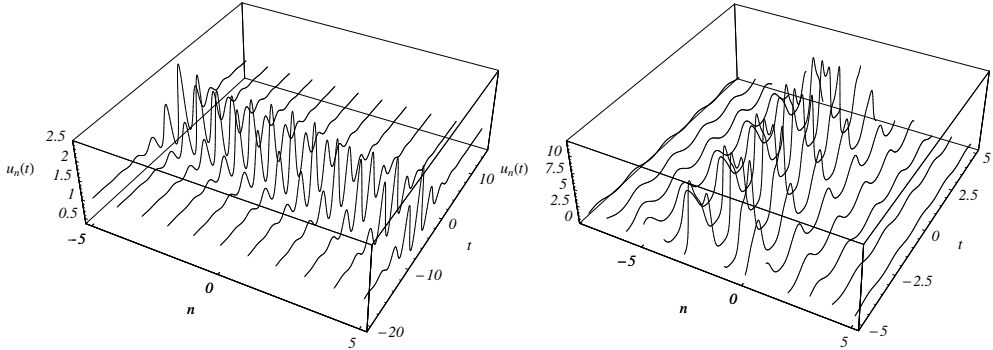

\centerline{\plot{65mm}{breather_1}\hfill\plot{65mm}{breather_2}}
\caption{A moving and a stable breathers. The values of parameters:
$\rho=1.2$, $\varphi=2\pi/3$ (left); $\rho=1.6$, $\varphi=3\pi/4$ (right); in
both cases $l=1$, $m=2$, $\alpha=\beta=0$.}
\label{fig:breather}
\end{figure}

\section{A discrete analog of the Kaup--Kupershmidt equation}\label{s:dKK}

The Kaup--Kupershmidt equation
\[
 U_{,\tau}=U_5+5UU_3+\frac{25}{2}U_1U_2+5U^2U_1
\]
is associated with the spectral problem $L\psi=\lambda\psi$ where $L$ is the
skew-symmetric ordinary differential operator of third order
\[
 L=D^3+UD+\frac{1}{2}U_{,x}=(D-f)D(D+f),\quad U=2f_{,x}-f^2.
\]
When we find a discrete analog, a difficulty is that a symmetric or
skew-symmetric difference operator can be of even order only. A way to overcome
this is to consider a 6th order difference problem, but on the odd nodes of the
lattice only, so that effectively it is of 3rd order with respect to the double
shift $T^2$ (however, the coefficients may depend on the variables associated
with the even nodes as well). Let us consider the spectral problem
\begin{equation}\label{dKK.psi}
 u_{-3}\psi_{-3}+\psi_{-1}=\lambda(\psi_1+u\psi_3)
\end{equation}
or, in the operator form, denoting $K=uT^3+T$:
\[
 K^\dag\psi=\lambda K\psi.
\]
The Lax equation for the operator $L=K^{-1}K^\dag$ can be written in the form
of system (\ref{PQt}). It admits the reduction $B=-A^\dag$ which yields the
equation
\begin{equation}\label{KAKKA}
 K_{,t}+A^\dag K+KA=0.
\end{equation}
The operator $A$ is found as a Laurent polynomial with respect to the even
powers of $T$,
\[
 A=a^{(k)}T^{2k}+\dots+a^{(-k)}T^{-2k},
\]
and a direct analysis of equation (\ref{KAKKA}) at $k=1,2$ proves the following
statement.

\begin{theorem}
Equation (\ref{KAKKA}) with $K=uT^3+T$ is equivalent to the nonlocal lattice
equation
\[
 u_{,t_1}=u(f_2u_2-f_1u_1+f_{-1}u_{-1}-f_{-2}u_{-2})+f_1-f_{-1},\quad
 f_3u=f_{-1}u_2
\]
under the choice
\[
 A= -fT^2+f_{-2}u_{-2}-f_{-1}u_{-1}+f_{-3}T^{-2};
\]
and it is equivalent to the local lattice equation
\begin{equation}\label{dKK}
 u_{,t_2}=u(v_3-v_2+v_1-v_{-1}+v_{-2}-v_{-3}-u_2+u_{-2}),\quad v:=u_1uu_{-1}
\end{equation}
under the choice
\begin{gather*}
 A= u_1T^4-u_{-4}T^{-4}+(1-u_{-1}u_{-2})(T^2-T^{-2})\\
  +u_{-1}-u_{-2}-v+v_{-1}-v_{-2}+v_{-3}.
\end{gather*}
\end{theorem}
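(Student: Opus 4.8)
The plan is to verify both assertions of the theorem by direct substitution into the Lax equation \eqref{KAKKA}, exactly in the spirit of the elementary computations \eqref{Bm-1_proof} and the proof of Theorem~1, but now tracking the powers of $T$ in an operator identity rather than a scalar one. Since $K=uT^3+T$ and $A$ is a Laurent polynomial in $T^2$ of order $k$, the operator $A^\dag K+KA$ involves a finite range of powers $T^j$; the content of the theorem is that all powers except the ``free'' one cancel, leaving a single evolution equation $u_{,t}=(\dots)$, together with an auxiliary algebraic relation defining the nonlocal variable $f$ (in the $k=1$ case).

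First I would treat the case $k=2$, which yields the local lattice \eqref{dKK}. Here $A=u_1T^4-u_{-4}T^{-4}+(1-u_{-1}u_{-2})(T^2-T^{-2})+(u_{-1}-u_{-2}-v+v_{-1}-v_{-2}+v_{-3})$ with $v=u_1uu_{-1}$, so $A^\dag$ is obtained coefficient-by-coefficient via $T^\dag=T^{-1}$ and $TA=T(A)T$. I would expand $K A = uT^3\cdot A + T\cdot A$ and $A^\dag K = A^\dag\cdot uT^3 + A^\dag\cdot T$, using the rule $T^j g = g_j T^j$ to push all shifts to the right; the potential powers of $T$ range over $T^7,T^5,\dots,T^{-7}$. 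The claim is that the coefficients of $T^7,T^5,T^3$ (and by skew-symmetry of the whole expression $K_{,t}+A^\dag K+KA$, also of $T^{-3},T^{-5},T^{-7}$) vanish identically as functions of the $u_j$, while $K_{,t}=u_{,t}T^3$ contributes only to the $T^3$ coefficient — so one should rather say the surviving relations are: the $T^7,T^5$ coefficients vanish outright, and the $T^3$ coefficient gives $u_{,t}=u(v_3-v_2+v_1-v_{-1}+v_{-2}-v_{-3}-u_2+u_{-2})$, with $T^1$ and $T^{-1}$ producing the conjugate of these and hence nothing new. I would organize this as a short table of coefficients and note that each cancellation is a polynomial identity in the $u_j$ that can be checked mechanically (or symbolically).

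For the case $k=1$, with $A=-fT^2+(f_{-2}u_{-2}-f_{-1}u_{-1})+f_{-3}T^{-2}$, the same expansion produces powers $T^5,T^3,T^1,T^{-1},T^{-3},T^{-5}$. Now the coefficient of $T^5$ does not vanish automatically: it is proportional to $f_3u-f_{-1}u_2$ (up to a shift), and setting it to zero is precisely the nonlocal constraint $f_3u=f_{-1}u_2$ that defines $f$ up to a multiplicative constant; the coefficient of $T^3$ then yields the stated evolution $u_{,t_1}=u(f_2u_2-f_1u_1+f_{-1}u_{-1}-f_{-2}u_{-2})+f_1-f_{-1}$ once the constraint is used to simplify, and the lower powers are again fixed by skew-symmetry. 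Thus the whole theorem reduces to a finite list of polynomial identities plus one solvability remark for the constraint.

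The main obstacle is purely bookkeeping: the reduction $B=-A^\dag$ must be checked to be consistent with the structure of system \eqref{PQt} for $L=K^{-1}K^\dag$ — i.e. that $L^\dag=L^{-1}$ is preserved by the flow, which forces the anti-symmetric form \eqref{KAKKA} — and then the expansions of $A^\dag K+KA$ for $k=1,2$ involve enough terms (especially with $v=u_1uu_{-1}$ cubic in the $u_j$) that sign errors are easy; the genuinely mathematical point, that no local $A$ of order higher than $2$ closes up and hence \emph{only one} operator $L$ appears here rather than an infinite family as in \eqref{Lml}, is asserted rather than proved and I would simply flag it, mirroring the paper's own remark that ``we were able to find just one operator $L$ in this case.''
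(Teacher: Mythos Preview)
Your approach---direct expansion of $K_{,t}+A^\dag K+KA$ and coefficient-by-coefficient verification---is exactly what the paper does; it offers no proof beyond the sentence ``a direct analysis of equation \eqref{KAKKA} at $k=1,2$ proves the following statement.''

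One correction to your bookkeeping, however: the expression $K_{,t}+A^\dag K+KA$ is \emph{not} skew-adjoint. Taking the adjoint gives $(K^\dag)_{,t}+K^\dag A+A^\dag K^\dag$, which is the same equation for $K^\dag$ rather than minus the original one, so you cannot halve the work by a symmetry argument. Relatedly, the range of powers is smaller than you wrote: since $K=uT^3+T$ has only positive powers and $A$ runs from $T^{2k}$ to $T^{-2k}$, both $KA$ and $A^\dag K$ produce only $T^{2k+3},\dots,T^{-2k+1}$. Thus for $k=1$ the non-trivial coefficients sit at $T^5,T^3,T^1,T^{-1}$ (the $T^5$ coefficient is $-uf_3+f_{-1}u_2$, giving the constraint; $T^3$ gives the evolution; $T^1$ and $T^{-1}$ vanish by short computations), and for $k=2$ at $T^7,T^5,T^3,T^1,T^{-1},T^{-3}$. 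The coefficients at $T^1,T^{-1},T^{-3}$ must be checked directly and do not follow from any conjugation of the higher ones. With this adjustment your plan goes through.
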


It is worth noticing that, alternatively, one can use the following pair of
operators (cf with the gauge equivalence (\ref{lm})):
\begin{equation}\label{dKK.psi'}
 \tilde K=uT^3+T^{-1},\quad
 \tilde A=-u_1u_{-1}T^4+u_{-2}u_{-4}T^{-4}-v+v_{-1}-v_{-2}+v_{-3}.
\end{equation}

The continuous limit to the KK equation is of the same general form as before,
namely, for the flow (\ref{dKK}) it reads
\[
 u(n,t_2)=\frac{1}{3}+\frac{4}{9}\varepsilon^2U\Bigl(x-\frac{8}{9}\varepsilon t_2,
  \tau+\frac{64\varepsilon^5}{135}t_2\Bigr),\quad x=\varepsilon n.
\]

\section{Examples related to generic operators}\label{s:generic}

Recall that, according to \cite{Blaszak_Marciniak}, the Bogoyavlensky type
lattices can be viewed as reductions of more general multi-field models
associated with the spectral problems $L\psi=\lambda\psi$ for generic
difference operators
$L=u^{(m)}T^m+u^{(m-1)}T^{m-1}+\dots+u^{(1-l)}T^{1-l}+u^{(-l)}T^{-l}$. Here
$m,l$ are any positive integers, and one can adopt the normalization
$u^{(m)}=1$ or $u^{(l)}=1$ without loss of generality. A part of the flows from
the corresponding hierarchy is consistent with the constraints
$u^{(m-1)}=\dots=u^{(1-l)}=0$ and this reduction brings to the Bogoyavlensky
lattices. A detailed study of some other reductions can be found in
\cite{Svinin}.

The lattices introduced in the previous sections are related with the spectral
problems $P\psi=\lambda Q\psi$ where operators $P,Q$ are binomial. It is
natural to expect that these lattices also define reductions for some
multi-field equations related with more general operators $P,Q$. The study of
such models is beyond the scope of the present paper and we restrict ourselves
by three typical examples.

\begin{example}
First, let us consider the Lax equations $P_{,t}=BP-PA$, $Q_{,t}=BQ-QA$ for the
binomial operators $P,Q$ with different potentials:
\[
 P=uT^3+T,\quad Q=T^2+v.
\]
If $v=u$ then operators $A,B$ are given by formulas (\ref{ABF}), (\ref{F}) with
a self-adjoint operator $F$ which contains only even powers of $T^2$. In the
general case two sets of operators $A,B$ appear, containing positive or
negative powers of $T^2$. The simplest operators and corresponding flows are
the following:
\begin{align*}
& A^- = v_{-2}v_{-1}T^{-2}+f_{-3}+f_{-2},\quad
  B^- = v_{-1}vT^{-2}+f_{-1}+f, \\
& \qquad
  \begin{aligned}
   u_{,t^-}&= u(f_{-1}-f_1),\\
   v_{,t^-}&= v(f+f_{-1}-f_{-2}-f_{-3}-v_1+v_{-1}),\quad f:=uv_1v_2;
  \end{aligned}\\[1ex]
& A^+ = u_{-2}u_{-1}T^2+g_{-1}+g,\quad
  B^+ = uu_1T^2+g+g_1, \\
& \qquad
  \begin{aligned}
   u_{,t^+}&= u(g+g_1-g_2-g_3-u_{-1}+u_1),\\
   v_{,t^+}&= v(g_1-g_{-1}),\qquad g:=u_{-2}u_{-1}v.
  \end{aligned}
\end{align*}
The flows $\partial_{t^-}$ and $\partial_{t^+}$ commute, and the flow
$\partial_{,t}=\partial_{t^-}-\partial_{t^+}$ admits the reduction $v=u$ which
brings to the dSK equation (\ref{dSK}). It should be noted that the same flows
can be obtained starting from the gauge equivalent operators $P=uT^3+T^2$,
$Q=T+v$.
\end{example}

\begin{example}
Now let us consider trinomial operators
\[
 P=uT^3+pT^2+T,\quad Q=T^2+qT+v.
\]
In this case operators $A,B$ contain the odd powers of $T$ as well. The
simplest operators and the corresponding flows are of the form
\begin{gather*}
 A^- = v_{-1}T^{-1}+v_{-1}p_{-2},\quad B^- = vT^{-1}+v_1p, \\
  \begin{aligned}
   u_{,x^-}&= u(u_{-1}q-u_1q_2-p+p_1),\\
   p_{,x^-}&= p(u_{-1}q-uq_1)+u-u_{-1},\\
   v_{,x^-}&= v(u_{-1}q-u_{-2}q_{-1}), \\
   q_{,x^-}&= uv_1-u_{-2}v;
  \end{aligned}
\end{gather*}
\begin{gather*}
 A^+ = u_{-2}T+u_{-2}q_{-1},\quad B^+ = uT+u_{-1}q, \\
  \begin{aligned}
   u_{,x^+}&= u(v_1p-v_2p_1),\\
   p_{,x^+}&= u_{-1}v-uv_2,\\
   v_{,x^+}&= v(v_1p-v_{-1}p_{-2}+q_{-1}-q), \\
   q_{,x^+}&= q(v_1p-vp_{-1})+v-v_1.
  \end{aligned}
\end{gather*}
\end{example}

\begin{example}
Let us consider the following generalization of the spectral problem
(\ref{dKK.psi'}):
\[
 K^\dag\psi=\lambda K\psi,\quad K=uT^3+v_{-1}T+T^{-1}.
\]
The isospectral deformations are defined by the operators
$A=a^{(k)}T^{2k}+a^{(k-1)}T^{2k-2}+\dots+a^{(-k)}T^{-2k}$. The simplest case
$k=1$ results in
\[
 A=u_{-1}T^2-u_{-2}T^{-2}+u_{-1}v_{-1}-u_{-2}v_{-2}
\]
and equation $K_{,t}+A^\dag K+KA=0$ is equivalent to the lattice
\begin{align*}
 u_{,t}&=-u(u_2v_2-u_1v_1+u_{-1}v_{-1}-u_{-2}v_{-2}-v_1+v_{-1}),\\
 v_{,t}&=-v(u_1v_1-u_{-1}v_{-1})+u_2u_1-u_{-1}u_{-2}+u_1-u_{-1}.
\end{align*}
The higher symmetry corresponding to $k=2$ is too bulky and we do not write it
down, however one can check that it admits the reduction $v=0$ to the dKK
equation (\ref{dKK}). In contrast, the flow $\partial_t$ itself does not admit
this reduction.
\end{example}

\section{Conclusion}

In this article we introduced a family of integrable lattice hierarchies
associated with fractional Lax operators. In particular, these hierarchies
contain equations found earlier in \cite{Tsujimoto_Hirota,Hu_Clarkson_Bullough}
by use of the Hirota bilinear formalism. We proved that these equations serve
as semi-discrete analogs of SK and KK equations. An important question which
remains open is about the Hamiltonian structure of the presented equations. As
usually, the existence of Lax representation allows to obtain a set of
conserved quantities which presumably are Hamiltonians, and moreover, the
applicability of $r$-matrix approach suggests that some more or less standard
Poisson bracket should exist. However, no explicit answer is found yet. Another
intriguing question is about possible relations with the models introduced in
\cite{Pugay,Hikami} within the theory of the lattice $W$ algebras.

\section*{Acknowledgements}

We are grateful to Ya.P. Pugay, Yu.B. Suris and A.K. Svinin for many
stimulating discussions. The research of V.A. was supported by grant
NSh--6501.2010.2.


\begin{thebibliography}{99}

\bibitem{Tsujimoto_Hirota} S. Tsujimoto, R. Hirota. Pfaffian representation
 of solutions to the discrete BKP hierarchy in bilinear form.
 {\em J. Phys. Soc. Japan \bfseries 65} (1996)
 \href{http://dx.doi.org/10.1143/JPSJ.65.2797}{2797--2806}.

\bibitem{Zakharov_Musher_Rubenchik} V.E. Zakharov, S.L. Musher, A.M. Rubenchik.
 Nonlinear stage of parametric wave excitation in a plasma.
 {\em JETP Lett. \bfseries 19:5} (1974) 151--152.

\bibitem{Manakov} S.V. Manakov.
 Complete integrability and stochastization of discrete dynamical systems.
 {\em Soviet Physics JETP \bfseries 40:2} (1975) 269--274.

\bibitem{Narita} K. Narita.
 Soliton solution to extended Volterra equation.
 {\em J. Phys. Soc. Japan \bfseries 51:5} (1982)
  \href{http://dx.doi.org/10.1143/JPSJ.51.1682}{1682--1685}.

\bibitem{Itoh} Y. Itoh. Integrals of a Lotka-Volterra system of odd number
 of variables. {\em Progr. Theor. Phys. \bfseries 78} (1987)
 \href{http://dx.doi.org/10.1143/PTP.78.507}{507--510}.

\bibitem{Bogoyavlensky} O.I. Bogoyavlensky.
 Algebraic constructions of integrable dynamical systems --- extensions of the
 Volterra system. {\em Russian Math. Surveys \bfseries 46:3} (1991)
 \href{http://dx.doi.org/10.1070/RM1991v046n03ABEH002801}{1--64}.

\bibitem{Hu_Clarkson_Bullough} X.B. Hu, P.A. Clarkson, R. Bullough.
 New integrable differential-difference systems.
 {\em J. Phys. A \bfseries 30:20} (1997)
 \href{http://dx.doi.org/10.1088/0305-4470/30/20/002}{L669--676}.

\bibitem{Sawada_Kotera} K. Sawada, T. Kotera. A method for finding  $n$-soliton
 solutions of the KdV equation and KdV-like equations. {\em Progr. Theor. Phys.
 \bfseries 51:5} (1974) \href{http://dx.doi.org/10.1143/PTP.51.1355}{1355--1367}.

\bibitem{Caudrey_Dodd_Gibbon} P.J. Caudrey, R.K. Dodd, J.D. Gibbon.
 A new hierarchy of Korteweg--de Vries equations.
 {\em Proc. Roy. Soc. London A \bfseries 351} (1976)
  \href{http://dx.doi.org/10.1098/rspa.1976.0149}{407--422}.

\bibitem{Krichever} I. Krichever. Linear operators with self-consistent
 coefficients and rational reductions of KP hierarchy.
 {\em Phys. D \bfseries 87:1--4} (1995)
 \href{http://dx.doi.org.scopeesprx.elsevier.com/10.1016/0167-2789(95)00146-U}{14--19}.

\bibitem{Kaup} D.J. Kaup.
 On the inverse scattering problem for cubic eigenvalue problems
 of the class $\psi_{xxx}+6Q\psi_x+6R\psi=\lambda\psi$.
 {\em Stud. Appl. Math. \bfseries 62} (1980) 189--216.

\bibitem{Kupershmidt} B.A. Kupershmidt.
 A super KdV equation: an integrable system.
 {\em Phys. Lett. A \bfseries 102:5--6} (1984)
 \href{http://dx.doi.org/10.1016/0375-9601(84)90693-5}{213--215}.

\bibitem{Drinfeld_Sokolov} V. G. Drinfel'd, V. V. Sokolov.
 Lie algebras and equations of Korteweg--de Vries type.
 {\em J. of Math. Sciences \bfseries 30:2} (1985)
 \href{http://dx.doi.org/10.1007/BF02105860}{1975--2036}.

\bibitem{Sokolov_Shabat} V.V. Sokolov, A.B. Shabat.
 $(L,A)$-pairs and a Ricatti type substitution.
 {\em Funct. Anal. Appl. \bfseries 14:2} (1980)
 \href{http://dx.doi.org/10.1007/BF01086571}{148--150}.

\bibitem{Fordy_Gibbons} A.P. Fordy, J.D. Gibbons.
 Some remarkable nonlinear transformations.
 {\em Phys. Lett. A \bfseries 75:5} (1980)
 \href{http://dx.doi.org/10.1016/0375-9601(80)90829-4}{325}.

\bibitem{Musette_Verhoeven} M. Musette, C. Verhoeven.
 Nonlinear superposition formula for the Kaup--Kupershmidt partial
 differential equation. {\em Physica D \bfseries 144:1,2} (2000)
 \href{http://dx.doi.org/10.1016/S0167-2789(00)00081-6}{211--220}.

\bibitem{Bogoyavlensky_b} O.I. Bogoyavlensky.
 Breaking solitons. Nonlinear integrable equations. Moscow: Nauka, 1991.

\bibitem{Suris} Yu.B. Suris. The problem of integrable discretization:
 Hamiltonian approach. Basel: Birkh\"auser, 2003.

\bibitem{Reyman_Semenov} A.G. Reyman, M.A. Semenov-Tian-Shansky.
 Integrable systems. Group-theoretical approach. Izhevsk, 2003. (in Russian)

\bibitem{Blaszak_Marciniak} M. B{\l}aszak, K. Marciniak.
 $R$-matrix approach to lattice integrable systems.
 {\em J. Math. Phys. 35:9} (1994)
 \href{http://dx.doi.org/10.1063/1.530807}{4661--4682}.

\bibitem{Yamilov} R.I. Yamilov. Symmetries as integrability criteria for
 differential difference equations. {\em J. Phys. A \bfseries 39:45} (2006)
 \href{http://dx.doi.org/10.1088/0305-4470/39/45/R01}{R541--623}.

\bibitem{Svinin} A.K. Svinin. On some class of homogeneous polynomials and
 explicit form of integrable hierarchies of differential-difference equations.
 {\em J. Phys. A \bfseries 44} (2011)
 \href{http://dx.doi.org/10.1088/1751-8113/44/16/165206}{165206} (16pp).

\bibitem{Miura_Gardner_Kruskal} R.M. Miura, C.S. Gardner, M.D. Kruskal.
 Korteveg--de Vries equation and generalizations. II. Existence of conservation
 laws and constants of motion. {\em J. Math. Phys. \bfseries 9:8} (1968)
 \href{http://dx.doi.org/10.1063/1.1664701}{1204--1209}.

\bibitem{Satsuma_Kaup} J. Satsuma, D.J. Kaup.
 A B\"acklund transformation for a higher order Korteweg--de Vries equation.
 {\em J. Phys. Soc. Japan \bfseries 43} (1977)
 \href{http://dx.doi.org/10.1143/JPSJ.43.692}{692--697}.

\bibitem{DJKM} E. Date, M. Jimbo, M. Kashiwara, T. Miwa.
 KP hierarchies of orthogonal and symplectic type.
 Transformation groups for soliton equations. VI.
 {\em J. Phys. Soc. Japan \bfseries 50} (1981)
 \href{http://dx.doi.org/10.1143/JPSJ.50.3813}{3813--3818}.

\bibitem{Parker} A. Parker.
 A reformulation of the `dressing method' for the Sawada--Kotera equation.
 {\em Inverse Problems \bfseries 17:4} (2001)
 \href{http://dx.doi.org/10.1088/0266-5611/17/4/321}{885--895}.

\bibitem{Pugay} Ya.P. Pugay. Lattice $W$ algebras and quantum groups.
 {\em Theor. Math. Phys. \bfseries 100} (1994)
 \href{http://dx.doi.org/10.1007/BF01017329}{900--911}.

\bibitem{Hikami} K. Hikami.
 Generalized lattice KdV type equation reduction of the lattice $W_3$ algebra.
 {\em J. Phys. Soc. Japan \bfseries 68} (1999)
 \href{http://dx.doi.org/10.1143/JPSJ.68.46}{46--50}.
\end{thebibliography}
\end{document}